\newtheorem{theorem}{Theorem}
\newcommand{\Change}[1]{{\color{red}\bf #1}}
\renewcommand{\Change}[1]{#1}
\newcommand{\mki}{
  Kavli Institute for Astrophysics and Space Research, 
  Massachusetts Institute of Technology , 77 Massachusetts  Ave., 
  Cambridge, MA 02139, USA
}
\begin{document}

\title{Unfolding X-ray Spectral Data: Conditions and Applications}

\author[0000-0003-2602-6703]{Sean J. Gunderson}\thanks{seang97@mit.edu}\affiliation{\mki}

\author[0000-0002-3860-6230]{David P.\ Huenemoerder}\affiliation{\mki}

\begin{abstract}

We present conditions for which X-ray spectra can be ``unfolded'' to present accurate \Change{representation} of the true source spectra. The method we use to unfold the data is implemented in the \textit{Interactive Spectral Interpretation Software} \citep{Houck2000} and distinguishes itself as being model-independent. We find that this method of unfolding makes accurate representations of the true source spectra \Change{(1) The detector is high-resolution and (2) The spectrum is not steeply sloped. These criteria are not simple conditions that give concrete determinations; each detector and spectrum must be judged individually. We find that both grating and imaging detectors can be unfolded with minimal distortions as compared to both continuum and local spectral features; the latter CCD detectors being much more energy dependent.} We also provide example use cases for unfolding in the context of current generation X-ray observatories and important caveats.
\end{abstract}

\keywords{}

\section{Introduction} \label{sec:intro}

As defined in \citet{Davis2001}, the observed count rate within an X-ray detector's channel $\lambda'$ from a 
source \Change{with spectral energy distribution} $s(\lambda)$ ($\mathrm{phot\,cm^{-2}s^{-1}\AA^{-1}}$)  is
\begin{equation}
    C(\lambda') = t \int_0^\infty s(\lambda) \mathscr{R}(\lambda', \lambda)\dd\lambda,\label{eq:CountsDef}
\end{equation}
where $\mathscr{R}(\lambda', \lambda)$ is the response of the detector and $t$ is the exposure time of the 
observation. The integration is over all incident photon wavelengths $\lambda$, though in practice it is 
truncated
to the range that defines the detector's sensitivity. As written, the response function $\mathscr{R}(\lambda', 
\lambda)$ combines both the effective area $A(\lambda)$ ($\mathrm{cm^2\, counts/photon}$) and redistribution 
function $R(\lambda', \lambda)$ (unitless). 
\Change{The effective area is the geometric area the optics would have if the reflection, transmission, and 
detection 
efficiencies were perfect for an incident photon of wavelength $\lambda$.  The redistribution function describes the smearing of the input photon across 
multiple detector channels, either spatially for dispersive spectrometers, or into arbitrary detector channels 
for non-dispersive spectrometers, like CCDs or proportional counters.} 
\citet{Davis2001} showed that for grating spectrometers, it is more appropriate to define $\mathscr{R}
(\lambda',\lambda)=A(\lambda',\lambda)R(\lambda)$ where as CCD detectors are \Change{more easily described by} 
$\mathscr{R}(\lambda',\lambda)=A(\lambda)R(\lambda',\lambda).$ \Change{However, as \mbox{\citet{Davis2001}} noted, 
the 
choice is arbitrary,} so without loss of generality, we will use the latter formulation, \Change{since this is 
used in  current software implementations and in standard calibration file formats}.\footnote{Even though it 
is 
a grating instrument, the \textit{Chandra} High Energy Transition Grating System's response files are defined 
like 
that of a CCD. Our choice is thus to also make comparisons to real instruments easier.}

Equation~\eqref{eq:CountsDef} is the basis of X-ray astronomy and the forward folding technique 
\citep{Arnaud1996} used extensively. 
\Change{One chooses a plausible model, $s_M(\lambda;\textbf{q})$ where $\textbf{q}$ is an array of model parameters, and minimizes an appropriate statistic, such as a $\chi^2$ (or one appropriate for a 
low-counts regime), by adjusting those parameters.  This iterative method is used because 
Equation~\eqref{eq:CountsDef} 
cannot be mathematically inverted for realistic response functions.  One must use physical intuition about the 
plausible forms of the solution and then see if there are any parameters which give model counts statistically 
consistent with the observed counts. The model $s_M(\lambda;\textbf{q})$ that produces a count spectra that is
statistically consistent with the observation $C(\lambda')$ is not a
determination of the true source, but is an estimate for physically useful parameters that can
describe the system. As a consequence, many different models can also fit a given spectra and describe the
source.} The forward folding method has seen broad success, especially as more 
efficient minimization techniques have followed advances in X-ray telescope resolutions.

Attempts at solving for the true source $s(\lambda')$ have occurred concurrently 
\citep{Blissett1979,Loredo1989,Rhea2021}, but these methods are not always practical for two reasons. First, 
X-ray detector responses are non-diagonal, posing computational limitations. Secondly, and more importantly, 
any solution to Equation~\eqref{eq:CountsDef} will not be unique. From Fredholm's theorem on 
integral equations, there will be a set of \Change{$n$-number of} linearly-independent solutions $\{s_1(\lambda),...,s_n(\lambda)\}$. In practice, this 
means that for any observation $C(\lambda')$, there are $n$-many possible functions that can be called the 
``source.'' \Change{The set of solutions is further expanded when uncertainties in the observed
counts in any given channel are included; each combination of counts within error produces
$n$-number of solutions.}

\Change{The number of solutions grows even more when one considers the many details that are hidden in Equation~\eqref{eq:CountsDef} that make it look mathematically possible to solve. 
The most important of which is the meaning of the ``source'' $s(\lambda)$. The extracted count spectra $C(\lambda')$ is heavily dependent on 
the choices made in its extraction (e.g., source and background regions, good-time intervals, event detection 
thresholds, grating order sorting). This further complicates any solving of Equation~\eqref{eq:CountsDef} since 
a small change in extraction parameters will change the count spectra.}

\Change{This can be seen in the the full, formal expression for the counts spectra
\begin{equation}
    C(\lambda',\sigma,t) = \int_0^\infty \int_\Omega s(\lambda, \hat{\mathbf{p}}, t)\mathscr{R}(\lambda',\lambda,\sigma, \hat{\mathbf{p}}, t)\dd\hat{\mathbf{p}}\dd\lambda.
\end{equation}
from \citet{Davis2001}. The response and source are both dependent
on time $t$ and the direction of the incoming photon $\hat{\mathbf{p}}$ from a region on the sky $\Omega$, while the response is also
dependent on the location (area) $\sigma$ that the photon is incident on the detector. These are all
handled internally in the data reduction of the raw data when an extraction occurs, and can be simplified
as \citet{Davis2001} shows to get Equation~\eqref{eq:CountsDef}, but must be considered when solving for
$s(\lambda)$ explicitly. These additional extraction parameters create effectively an infinite number of solutions
for any given X-ray observation.

A priori knowledge can be used to reduce
the number of solutions to some smaller set, which has been used on some systems \citep{Blissett1979, Loredo1989}, but we do not always have prior knowledge of the physics
of a given system (see the discourse on $\gamma$\,Cas stars for example; \citealp{Langer2020}). Forward folding is required in these situations despite not providing a true source flux model. Yet, having some form of a \Change{model-independent source spectrum} is the ``holy grail'' of X-ray astronomy and why efforts to de-convolve an observed spectra $C(\lambda')$ continue. Note, though, that we have only stated that solving for $s(\lambda)$ in a functional form is impossible (or at least not practical). If our goal is only to have a \textit{representation} of the source $\bar{S}(\lambda')$, say by ``unfolding'' it from the response, then we circumvent the Fredholm problem since we are not trying to find a functional form of $s(\lambda)$.

}  %%%% you can move this; I just inserted it in a likely spot so I could recompile.... [dph]

We will define this representation $\bar{S}(\lambda')$ as the the unfolded flux spectra
\begin{equation}
    \bar{S}(\lambda')\equiv \frac{C(\lambda')}{t\int_{\Delta \lambda(\lambda')} \mathscr{R}(\lambda', \lambda)\dd\lambda}.\label{eq:Sbar}
\end{equation}
\Change{The denominator represents the counts distribution one would get from a source spectrum 
with unit flux, $s(\lambda)=1$.  We are dividing the observed counts by the convolved counts per 
unit flux, and hence the result has units of flux.}
By unfolding, we are attempting to remove the effects of the response from the observed counts in 
a model-independent fashion. The denominator is the characteristic response of the detector 
\Change{smoothed over all wavelengths in $\Delta\lambda$ that are read as $\lambda'$, and scaled 
by the exposure time.} 

\Change{Note that while we have been defining} $C(\lambda')$ as the count spectrum of an 
observation, it 
is more formally the background subtracted count spectrum $C(\lambda') = c(\lambda') - 
B(\lambda')$. This is in general a trivial inclusion to the problem since the background only 
increases the errors in the count spectra. There may be some cases where the background can not be 
simply unfolded along with the source counts, but these cases are few in number and beyond the 
scope of this work.

We are not attempting to take credit for the algorithms or inception of this unfolding method. It 
was originally implemented in the \textit{Interactive Spectral Interpretation System} 
(\textsc{isis}; \citealp{Houck2000})\footnote{\url{https://space.mit.edu/cxc/isis/}}. Our goal is 
to make this type of unfolding more widely known and explore the limits of when 
Equation~\eqref{eq:Sbar} is an accurate representation of $s(\lambda)$. As a first step to doing 
this, we first need to explore important properties of Equation~\eqref{eq:Sbar} and how it 
compares to other methods of unfolding.

\subsection{Important Theorems}

While we have said we want to accurately represent of $s(\lambda)$, it is more accurate to say 
that we want to compare an imperfect response $\mathscr{R}(\lambda,\lambda')$ to an idealized, 
perfect response
\begin{equation}
    \mathscr{R}_\mathrm{P}(\lambda',\lambda) = A(\lambda)\delta(\lambda-\lambda')\label{eq:PerfectResponse}
\end{equation}
where $\delta(\lambda-\lambda')$ is the Dirac delta function. We will assume here that $A(\lambda)$ is the area factor for both $\mathscr{R}_\mathrm{P}$ and $\mathscr{R}$ such that the detectors are equivalent in all but their resolving power. Note that for a perfect detector, we can map every incident photon exactly to a corresponding $\lambda'$, meaning $\lambda' = \lambda$. \Change{In other words, the detector has continuous bins \citep{Davis2001}, meaning infinite resolving power, so every incident photon's wavelength is known exactly.} Given the perfect response, the true flux spectra is then
\begin{equation}
    S(\lambda') = \frac{\int_0^\infty s(\lambda)\mathscr{R}_\mathrm{P}(\lambda',\lambda)\dd\lambda}{\int_{\Delta\lambda(\lambda')} \mathscr{R}_\mathrm{P}(\lambda',\lambda)\dd\lambda} = \frac{A(\lambda')s(\lambda')}{A(\lambda')}=s(\lambda')\label{eq:Strue}
\end{equation}
This perfectly-redistributed spectra $S(\lambda')$ will be the basis of comparison against the unfolded spectra $\bar{S}(\lambda')$. To start, we first need to address the uniqueness problem.

\begin{theorem}
    For a given X-ray detector response \mbox{$\mathscr{R}(\lambda',\lambda)$} used to make an observation of a source $s(\lambda)$, the unfolded spectra $\bar{S}$ given by Equation~\eqref{eq:Sbar} is unique.
\end{theorem}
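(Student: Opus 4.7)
The plan is to prove uniqueness by direct appeal to the constructive definition in Equation~\eqref{eq:Sbar}, rather than attempting to invert Equation~\eqref{eq:CountsDef}. Unlike the Fredholm-style problem of solving for $s(\lambda)$, which admits an infinite family of linearly independent solutions, $\bar{S}(\lambda')$ is defined as an explicit ratio of two quantities that are each fixed by the observation and the detector calibration. So the uniqueness question reduces to checking that the numerator and denominator are each single-valued functions of $\lambda'$.

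First I would enumerate the ingredients on the right-hand side of Equation~\eqref{eq:Sbar}: the background-subtracted count spectrum $C(\lambda')$, the exposure time $t$, the response $\mathscr{R}(\lambda',\lambda)$, and the bin $\Delta\lambda(\lambda')$ associated with channel $\lambda'$. All of these are determined either by the observation (for $C$ and $t$) or by the instrument calibration files (for $\mathscr{R}$ and $\Delta\lambda$). In particular, the denominator $t\int_{\Delta\lambda(\lambda')}\mathscr{R}(\lambda',\lambda)\,\dd\lambda$ depends only on the fixed detector model, not on any hypothesis about $s(\lambda)$.

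Next I would argue by contradiction in one short line: suppose there exist two distinct representations $\bar{S}_1(\lambda')\neq\bar{S}_2(\lambda')$ derived from the same $(C,\mathscr{R},t)$. Evaluating Equation~\eqref{eq:Sbar} for each, both equal the same ratio at every channel $\lambda'$, so $\bar{S}_1\equiv\bar{S}_2$, contradicting the assumption. Hence $\bar{S}$ is unique.

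The main obstacle here is not technical but conceptual: I would need to make it unambiguous that the theorem is a statement about the map $(C,\mathscr{R},t)\mapsto\bar{S}$ being well-defined, \emph{not} a claim that $\bar{S}$ uniquely recovers the true source $s(\lambda)$. The latter would directly contradict the Fredholm argument recalled earlier in the introduction. Emphasizing this distinction — that unfolding circumvents the uniqueness problem precisely because it abandons the goal of recovering a functional $s(\lambda)$ and replaces it with a channel-wise rescaling — is what the prose surrounding the theorem needs to carry. The formal proof itself is essentially a one-line appeal to the definition.
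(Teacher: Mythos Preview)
Your proposal is correct and follows essentially the same contradiction argument as the paper: assume two distinct $\bar{S}_1,\bar{S}_2$, observe that Equation~\eqref{eq:Sbar} forces them to coincide, done. The only cosmetic difference is that the paper traces the numerator back through Equation~\eqref{eq:CountsDef} to the source $s(\lambda)$ (concluding $s\neq s$), whereas you stop at the observed counts $C(\lambda')$; your version is slightly more direct, and your emphasis on the conceptual distinction between ``$\bar{S}$ is well-defined'' and ``$\bar{S}$ recovers $s$'' is a useful clarification that the paper leaves implicit.
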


\begin{proof}
    Assume by way of contradiction that there exists two unfolded spectra $\bar{S}_1(\lambda')$ and $\bar{S}_2(\lambda')$ of a single source $s(\lambda)$ such that $\bar{S}_1(\lambda')\neq \bar{S}_2(\lambda')$ for some $\lambda'$. In one of these channels, we can then say from Equations~\eqref{eq:CountsDef} and \eqref{eq:Sbar} that
    \begin{equation}
        \int_0^\infty s(\lambda)\mathscr{R}(\lambda',\lambda)\dd\lambda\neq\int_0^\infty s(\lambda)\mathscr{R}(\lambda',\lambda)\dd\lambda
    \end{equation}
    Which implies that $s(\lambda)\neq s(\lambda)$ in the given channels. This contradicts our assumption that we only have one source. Thus there can not be two different unfolded spectra for the same \Change{extracted spectrum}.
\end{proof}

An alternative unfolding is provided in the \textsc{xspec} \citep{Arnaud1996} as
\begin{equation}
    \bar{S}_\textsc{xspec}(\lambda') = C(\lambda')\frac{s_M(\lambda')}{t\int_0^\infty s_M(\lambda)\mathscr{R}(\lambda',\lambda)\dd\lambda},\label{eq:XspecUnfold}
\end{equation}
where $s_M(\lambda)$ is an assumed model. \Change{This method of unfolding spectra can provide a model-independent representation of the source, but \textit{only} in the case of a perfect detector. If we insert Equations~\eqref{eq:CountsDef} and \eqref{eq:PerfectResponse} into \eqref{eq:XspecUnfold}, we see that
\begin{eqnarray}
    \bar{S}_\textsc{xspec}(\lambda') &&= \frac{s_M(\lambda')\int_0^\infty s(\lambda)A(\lambda)\delta(\lambda-\lambda')\dd\lambda}{\int_0^\infty s_M(\lambda) A(\lambda) \delta(\lambda-\lambda')\dd\lambda}\nonumber\\
    &&= \frac{s_M(\lambda')s(\lambda')A(\lambda')}{s_M(\lambda')A(\lambda')}\nonumber\\
    &&= s(\lambda')
\end{eqnarray}
If the response shows any deviations from the perfect response, even if the resolving power is exceptionally high, the \textsc{xspec} version of unfolding will give a model-\textit{dependent} representation. While this may have some historic uses, it is not the objective we have outlined that unfolding should produce.}

The \textsc{isis} method does not have this problems and will be shown to be particularly good at reproducing the source flux\Change{, even for imperfect, lower resolution detectors}. This is not to say that the \textsc{isis} method is without problems. Even when we know the true source $s(\lambda)$, there is some non-zero error when unfolding. More importantly, for the cases where we do not know $s(\lambda)$, the error is dependent on the assumed model.

\begin{theorem}
    The error in unfolding the count spectrum using Equation~\eqref{eq:Sbar} depends on the assumed model and is always non-zero for some $\lambda'$.
\end{theorem}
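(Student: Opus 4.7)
The plan is to express the residual $\bar{S}(\lambda')-s(\lambda')$ directly as a functional of the source $s$ and then argue (i) that this functional genuinely depends on the form of $s$ and (ii) that it cannot vanish identically for a non-perfect response. First I would substitute Equation~\eqref{eq:CountsDef} into Equation~\eqref{eq:Sbar} to obtain
\begin{equation*}
\bar{S}(\lambda') = \frac{\int_0^\infty s(\lambda)\mathscr{R}(\lambda',\lambda)\dd\lambda}{\int_{\Delta\lambda(\lambda')}\mathscr{R}(\lambda',\lambda)\dd\lambda},
\end{equation*}
and then subtract $s(\lambda')$ after adding and removing $s(\lambda')\int_{\Delta\lambda(\lambda')}\mathscr{R}(\lambda',\lambda)\dd\lambda$ in the numerator. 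Regrouping casts the error in the form
\begin{equation*}
\bar{S}(\lambda') - s(\lambda') = \frac{\int_0^\infty [s(\lambda)-s(\lambda')]\mathscr{R}(\lambda',\lambda)\dd\lambda + s(\lambda')\int_{\lambda\notin\Delta\lambda(\lambda')}\mathscr{R}(\lambda',\lambda)\dd\lambda}{\int_{\Delta\lambda(\lambda')}\mathscr{R}(\lambda',\lambda)\dd\lambda}.
\end{equation*}
Since $s(\lambda)$ enters the numerator over the full support of the kernel $\mathscr{R}(\lambda',\cdot)$, and not merely at $\lambda'$, the error is manifestly a functional of the assumed source shape; this proves the first half of the theorem.

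For the second half, I would Taylor-expand $s$ about $\lambda'$ to rewrite the first term in the numerator as $\sum_{k\geq 1} s^{(k)}(\lambda')\,M_k(\lambda')/k!$, where $M_k(\lambda') \equiv \int(\lambda-\lambda')^k\mathscr{R}(\lambda',\lambda)\dd\lambda$ are moments of the response kernel, and retain the second term as a tail contribution from the domain mismatch between $(0,\infty)$ and $\Delta\lambda(\lambda')$. For the perfect response $\mathscr{R}_{\mathrm P}(\lambda',\lambda)=A(\lambda)\delta(\lambda-\lambda')$ every $M_{k\geq 1}$ vanishes and the tail is zero, consistent with Equation~\eqref{eq:Strue}. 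For any imperfect detector the kernel has finite width so at least some $M_k$ are strictly positive, and unless $s$ is exactly constant over the resolution element the sum cannot vanish.

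The main obstacle is phrasing ``nonzero for some $\lambda'$'' cleanly without leaving room for pathological cancellations (for instance, isolated $\lambda'$ at which derivative terms happen to cancel the tail). My preferred route is a short contradiction argument: if $\bar{S}(\lambda')\equiv s(\lambda')$ for every $\lambda'$, then the integral identity above must hold for \emph{every} smooth $s$ one could substitute, which forces all nontrivial moments of $\mathscr{R}(\lambda',\cdot)$ to annihilate arbitrary test functions and hence concentrates $\mathscr{R}$ on $\lambda=\lambda'$, making it coincide with $\mathscr{R}_{\mathrm P}$. Because the theorem is stated for a generic imperfect $\mathscr{R}$, this contradicts the hypothesis, so the error must be nonzero for at least one $\lambda'$.
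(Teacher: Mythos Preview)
Your argument is more analytical than the paper's, but it diverges from the paper's intended reading of ``assumed model'' and has a quantifier slip in the contradiction step.

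On the first point: the paper's proof is a two-case split. In case one the comparison is against the \emph{true} source $s(\lambda')$, giving
\[
\varepsilon(\lambda')=\left|\frac{\int_0^\infty s(\lambda)\,\mathscr{R}(\lambda',\lambda)\,\dd\lambda}{s(\lambda')\int_{\Delta\lambda(\lambda')}\mathscr{R}(\lambda',\lambda)\,\dd\lambda}-1\right|,
\]
and the paper simply observes this vanishes where $s$ is locally constant and is nonzero otherwise for an imperfect $\mathscr{R}$. In case two the comparison is against a \emph{separate} assumed model $s_M(\lambda')\neq s(\lambda')$, so $s_M$ replaces $s$ in the denominator; the explicit appearance of $s_M$ is what the paper means by ``depends on the assumed model.'' Your write-up treats only the first case and identifies the ``assumed model'' with $s$ itself, so the model-dependence you exhibit is dependence on the true source shape rather than on the analyst's choice of $s_M$. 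Your moment expansion $\sum_{k\ge1}s^{(k)}(\lambda')M_k(\lambda')/k!$ is a nicer way to see why non-constant $s$ gives nonzero error than the paper's bare assertion, but you should add the $s_M$ case to match the theorem as the paper states it.

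On the contradiction step: you assume $\bar{S}(\lambda')\equiv s(\lambda')$ for all $\lambda'$ and then say the identity ``must hold for \emph{every} smooth $s$ one could substitute,'' concluding $\mathscr{R}=\mathscr{R}_{\mathrm P}$. But the negation of ``nonzero for some $\lambda'$'' fixes a \emph{single} source $s$ with $\bar{S}\equiv s$ everywhere; you are not entitled to vary $s$ afterward. What you actually prove is the weaker statement that no imperfect $\mathscr{R}$ unfolds \emph{all} sources exactly, which does not rule out some particular non-constant $s$ being reproduced exactly. To close the gap you would need to argue directly from the moment expansion that a non-constant $s$ and strictly positive $M_k$ force $\varepsilon(\lambda')\neq0$ at some $\lambda'$, which is essentially what the paper asserts (without detail) in its case one.
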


\begin{proof}
    There are two cases we will consider. First, assume we know exactly the true source $s(\lambda)$. In this case, the error due to unfolding in a given channel $\varepsilon(\lambda')$ is
    \begin{equation}
        \varepsilon(\lambda')=\left|\frac{\bar{S}(\lambda')-S(\lambda')}{S(\lambda')}\right|.\label{eq:Uncertainty}
    \end{equation}
    \Change{Applying Equations~\eqref{eq:CountsDef}, \eqref{eq:Sbar}, and \eqref{eq:Strue} to Equation~\eqref{eq:Uncertainty}}, the error is more explicitly
    \begin{equation}
        \varepsilon(\lambda')=\left|\frac{\int_0^\infty s(\lambda) \mathscr{R}(\lambda', \lambda)\dd\lambda}{s(\lambda')\int_{\Delta \lambda(\lambda')} \mathscr{R}(\lambda', \lambda)\dd\lambda} - 1\right|.
    \end{equation}
    Now, we again we have two cases. First, if the source is constant for some small range of $\lambda$, the uncertainties go to zero for the channels that correspond to those input wavelengths. In the second case of when $s(\lambda)$ is non-constant, the error will be non-zero for the corresponding channels as long as the response function is imperfect.

    Next, let us assume we do not know the true source and are comparing against an assumed model $s_M(\lambda)$ such that $s_M(\lambda')\neq s(\lambda)$. The uncertainty in the channel will be
    \begin{equation}
        \varepsilon(\lambda')=\left|\frac{\int_0^\infty s(\lambda) \mathscr{R}(\lambda', \lambda)\dd\lambda}{s_M(\lambda')\int_{\Delta \lambda(\lambda')} \mathscr{R}(\lambda', \lambda)\dd\lambda} - 1\right|.
    \end{equation}
    Here we can see the  see the dependence on the assumed model $s_M(\lambda')$ and that regardless of whether $s(\lambda)$ is constant or not, the uncertainty will be non-zero. 
\end{proof}

It should be emphasized here that the uncertainty defined by Equation~\eqref{eq:Uncertainty} is on a channel-by-channel basis. How one quantifies the accuracy of unfolding is open to interpretation. In this work, we will use channel-by-channel comparisons, but one could also define single quantity measures as well, such as a \Change{$\chi^2$}.

This paper is organized as follows. In \S~\ref{sec:ToyModels}, we show the conditions underwhich spectra can be unfolded accurately. This is done for both grating and CCD detecors. In \S~\ref{sec:RealDetectors}, we give examples of how unfolding can be used on real X-ray data. Finally, in \S~\ref{sec:Conclusions}, we give our results and conclusions.

\section{Idealized Toy Models}\label{sec:ToyModels}

In this section we will illustrate the unfolding of spectra under different conditions. Each of the examples will be of idealized, imperfect detectors. The idealization will be made for the purposes of providing analytical formulae for the spectra. For real detector examples, see \S~\ref{sec:RealDetectors}

\subsection{Grating Detectors}

\subsubsection{Gaussian Response and Constant Area}

For this example, we will use a response function describing the constant width of the response 
for grating detectors. \Change{A common approximation of a grating detector's response function is 
a Gaussian {\citep[e.g.,][]{Herman2004b,Kaastra2016}}, so we will use such a functional form:}
\begin{equation}\label{eq:gaussrmf}
    \mathscr{R}(\lambda',\lambda) = A_0 \frac{R_0}{\sqrt{2\pi}\sigma} \exp(-\frac{1}
    {2}\left(\frac{\lambda'-\lambda}{\sigma}\right)^2),
\end{equation}
where $A_0$ is the effective area that we assume to be constant for all wavelengths, $R_0$ is the 
normalization of the detector, and $\sigma$ is the width of the response to an incoming photon 
$\lambda$.

We will first show the accuracy of unfolding a broadband continuum. For simplicity, we will \Change{assume we have observed a source with a power law flux density}
\begin{equation}
    s(\lambda) = s_0 \left(\frac{\lambda}{\lambda_0}\right)^\gamma,
\end{equation}\label{eq:plawsrc}
where $s_0$ is the flux density at $\lambda_0$. 
\Change{Substituting this source and the Gaussian response of Equation~\eqref{eq:gaussrmf} into Equations~\eqref{eq:CountsDef} and \eqref{eq:Sbar}, and performing the integration, we can derive the unfolded flux spectrum:}
\begin{eqnarray}
    \bar{S}(\lambda') = &&\frac{s_0 2^{\gamma/2}\sigma^{\gamma-1}}{\lambda_0^\gamma\sqrt{\pi}(1+\mathrm{erf}(\frac{\lambda'}{\sqrt{2}\sigma}))}\\
    &&\times\left(\sqrt{2}\lambda'\Gamma\left(1+\frac{\gamma}{2}\right){}_1F_1\left(\frac{1+\gamma}{2};\frac{3}{2};-\frac{1}{2}\left(\frac{\lambda'}{\sigma}\right)^2\right)\right.\nonumber\\
    &&+\left.\sigma\Gamma\left(\frac{1+\gamma}{2}\right){}_1F_1\left(-\frac{\gamma}{2};\frac{1}{2};-\frac{1}{2}\left(\frac{\lambda'}{\sigma}\right)^2\right)\right),\nonumber
\end{eqnarray}
where $\mathrm{erf}(x)$ is the error function, $\Gamma(z)$ is the complete gamma function, and ${}_1F_1(a;b;z)$ is the confluent hypergeometric function of the first kind. The error in the unfolding for this model is given in Figure~\ref{fig:GratingsPowerlawError} for varying values of $\gamma$ (left panel) and response width $\sigma$ (right panel).

\begin{figure*}
    \centering
    \begin{minipage}{0.49\linewidth}
        \includegraphics[width=\linewidth]{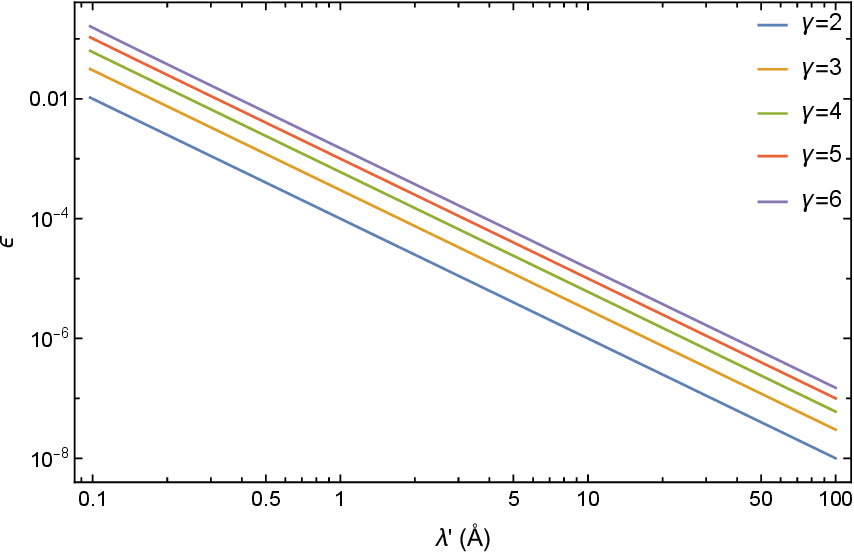}
    \end{minipage}
    \begin{minipage}{0.49\linewidth}
        \includegraphics[width=\linewidth]{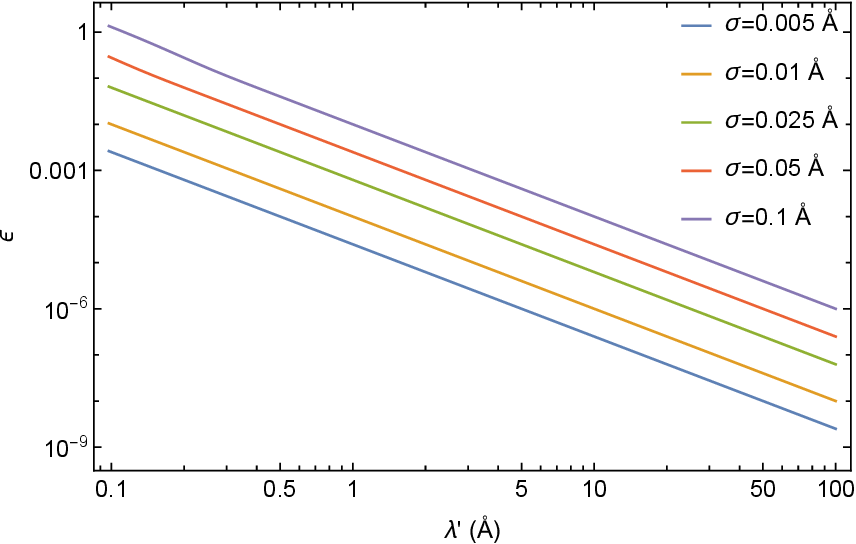}
    \end{minipage}
    \caption{Error in the unfolding of a power law continuum observed with a Gaussian response with constant effective area. The power law index is varied in the left panel while $\sigma=2$ is held constant. The right panel shows $\sigma$ varyind while $\gamma=2$ is held constant.}
    \label{fig:GratingsPowerlawError}
\end{figure*}

The negative correlation between the errors and channel wavelength is a feature of the grating detectors with constant response width, but it highlights how important a role the response plays. The more wavelengths covered by the response width, the worse the unfolding is. At the same time, the steepness of the incident flux spectra can compound on this problem as shown in the left panel of Figure~\ref{fig:GratingsPowerlawError}. The implications of this is that steeper the source feature at any given wavelength for a grating detector, the worse the uncertainty in the unfolding.

The width of the response has a much more significant effect on the unfolding error. For \textit{Chandra} grating widths (blue and orange curves), the errors are well below 1 percent for all wavelengths. As the resolution goes down, the shortest wavelengths can have significant error despite the longer wavelengths continuing to have minimal error. The wavelengths a detector is sensitive too can of course modify the determination of whether the continuum is well unfolded. Lower resolution instruments may be restricted to longer wavelengths where the unfolding is minimally-error prone (see \textit{XMM-Newton} below for example). We can conclude, generally speaking, that broadband continuum features can be unfolded with minimal distortion from the assumed source model if they are not significantly steep.

The next question is whether local features, like an emission line, get distorted by unfolding. For that we will use a Gaussian-line profile
\begin{equation}
    s(\lambda) = \frac{s_0}{\sqrt{2\pi}\sigma_L} \exp(-\frac{1}{2}\left(\frac{\lambda-\lambda_L}{\sigma_L}\right)^2),
\end{equation}
with line center $\lambda_L$ and width $\sigma_L$. \Change{Following the same procedure as with the power law, the unfolded line flux from our grating response is}
\begin{eqnarray}
    \bar{S}(\lambda') = &&\frac{s_0}{\sqrt{2\pi}}\frac{1}{\sqrt{\sigma^2+\sigma_L^2}}\frac{1+\erf\left(\frac{1}{\sqrt{2}}\frac{\lambda_L \sigma^2+\lambda'\sigma_L^2}{\sigma\sigma_L\sqrt{\sigma^2+\sigma_L^2}}\right)}{1+\erf\left(\frac{1}{\sqrt{2}}\frac{\lambda'}{\sigma}\right)}\nonumber\\
    &&\times\exp(-\frac{1}{2}\frac{(\lambda'-\lambda_L)^2}{\sigma^2+\sigma_L^2}).
\end{eqnarray}
For this case, comparing against the source Gaussian profile is not a useful metric due to the broadening caused by the response. Instead, we will compare the unfolded lines to the lines in counts space to look for any distortions in the line shape. This is done in two combinations in Figure~\ref{fig:GratingLineProfileError}. In the left panel, lines of similar width at different wavelengths (as appropriate for the \textit{Chandra} \Change{High Energy Transmission Grating System's (HETG) Medium Energy Grating (MEG)} pre-contamination; \citealp{Herman04}, \citealp{ODell17}) are unfolded and plotted with their count spectra version. Across all wavelengths in our toy model detector, the unfolding introduces no distortions to the shape. This seems obvious given the smoothness of the model response, but it highlights that dividing out the response in a small region will not distort the line.

\begin{figure*}
    \center
    \begin{minipage}{0.49\linewidth}
        \includegraphics[width=\linewidth]{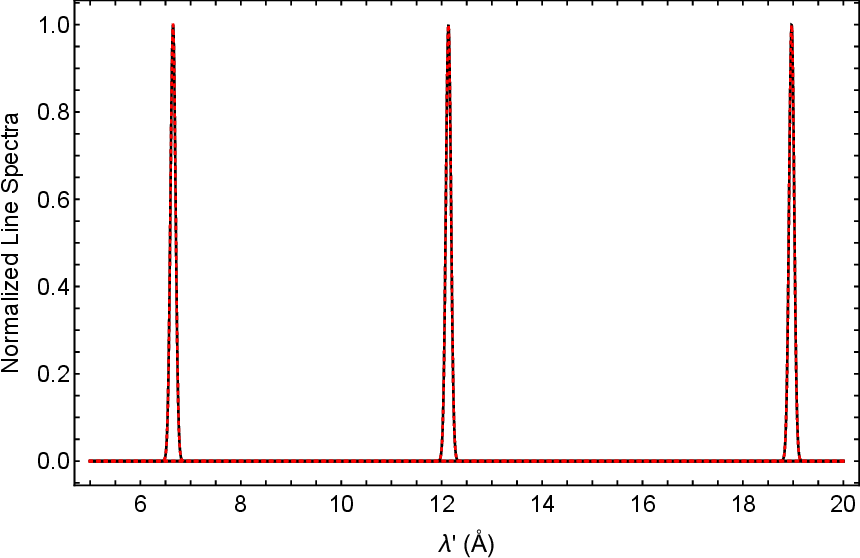}
    \end{minipage}
    \begin{minipage}{0.49\linewidth}
        \includegraphics[width=\linewidth]{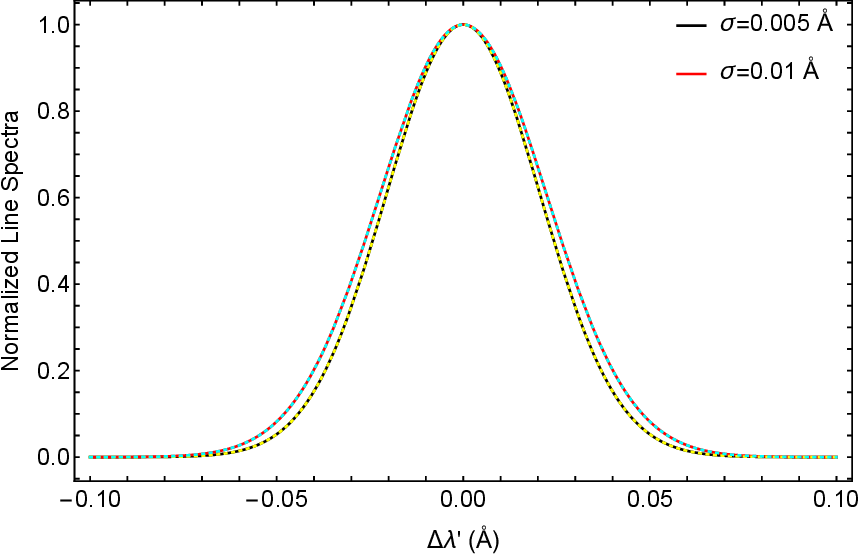}
    \end{minipage}
    \caption{Left: Comparison of normalized unfolded lines (solid black) and normalized counts lines (dotted red) at different wavelengths. Each line had a width of $\sigma_L=0.05$\,\AA\ while the detector response width was $\sigma=0.01$\,\AA. Right: Comparison of a normalized unfolded line (solid) and a normalized counts lines (dotted) for different detector response widths. The responses widths used correspond to \textit{Chandra's} HEG (black, yellow) and MEG (red, cyan). Both cases used the same line width of $\sigma_L=0.02$\,\AA.}
    \label{fig:GratingLineProfileError}
\end{figure*}

The next question is whether the grating's response width will cause any distortions. We show how the width affects the unfolding in the right panel of Figure~\ref{fig:GratingLineProfileError} for the case of a line with $\sigma_L=0.2$\,\AA\ width when viewed in a detector with the same response width as the \textit{Chandra} \Change{HETG High Energy Grating} (HEG; black, yellow) and MEG (red, cyan). The solid lines are the unfolded lines, which are again an exact match to the counts profiles (dotted lines).

We can conclude from the power law and Gaussian examples that unfolding causes minimal distortions in grating spectra under the following conditions
\begin{enumerate}
    \item The detector is high resolution,
    \item The continuum is not steeply sloped.
\end{enumerate}
Additionally, any local features like an emission line will not have their shapes distorted. These 
requirements are not well defined in terms of a quantifiable metric. Instead, individual detectors 
must be evaluated for their own unique conditions given their detector characteristics. For 
example, based on Figure~\ref{fig:GratingsPowerlawError}, we can conclude that for 
\textit{Chandra}'s HETG, which is senstive down to 1\,\AA, continuum slopes of up to $\gamma=6$ 
can be unfolded and be representative of the true source's slope to within 1 percent. On the other 
hand, a lower resolution detector sensitive at the same wavelengths would have more significant 
distortions to the continuum at the same wavelengths.

\subsection{Gaussian Response with Discontinuous Area}\label{sec:SharpFeature}

X-ray detectors often contain sharp features due to edges, dead pixels, gaps, and other physical 
attributes of their systems that make these toy models less representative of real situations. 
These sharp features can be smoothed over through dithering, but not all detectors apply such 
methods. \Change{Forward folding  accounts for spectrally sharp detector features in the transformation from smooth model to discrete counts, but here we will demonstrate how the 
unfolding defined by Equation~\eqref{eq:Sbar} can very effectively remove detector artifacts and provide cleaner spectra for guiding analysis.}

While there are a number of possible sharp features, we will use a simple example of a large 
discontinuity in the effective \Change{area}
\begin{equation}
    A(\lambda)=A_0\begin{cases}
        1 & \lambda \leq 10\,\text{\AA}\\
        5 & \lambda > 10\,\text{\AA}
    \end{cases}.
\end{equation}

If we use this area in the power-law source, the count rate spectra (in arbitrary units) is given 
as the blue curves in Figure~\ref{fig:PowerLawSharpFeature}. In both panels the power law index is 
$\gamma=1$ while the response widths are $\sigma=0.01$ (left) and $\sigma=0.1$ (right). In both 
cases, there is a clear jump in the spectra at the discontinuity without much influence in the 
width of the response. The unfolded spectra is plotted with arbitrary units in orange on the same 
plots and shows that the \Change{discontinuity} is completely removed. There are some complications 
from the unfolding being incorrect at the shortest wavelengths in the right panel, but those have 
been discussed already. What these plots highlight is that sharp features can be removed, 
independent of how high resolution the response is, without further distortions in broadband 
continuum sources.

\begin{figure*}
    \center
    \begin{minipage}{0.49\linewidth}
        \includegraphics[width=\linewidth]{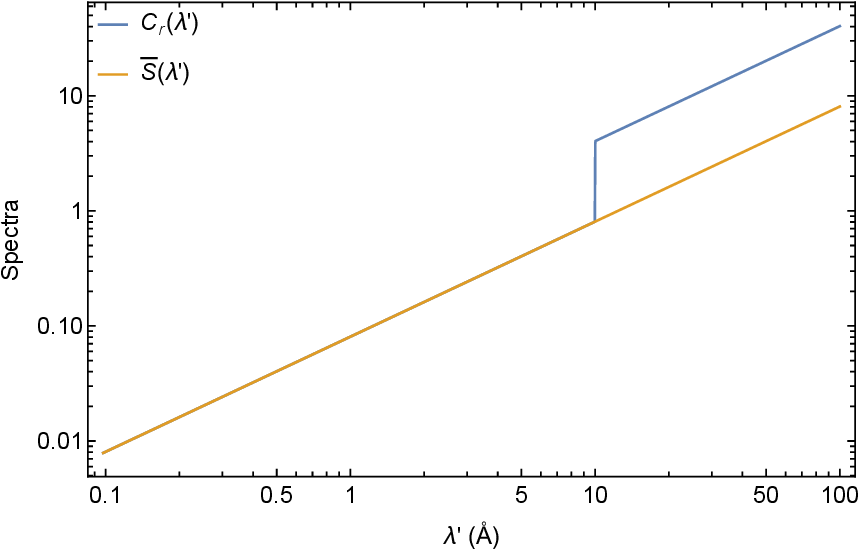}
    \end{minipage}
    \begin{minipage}{0.49\linewidth}
        \includegraphics[width=\linewidth]{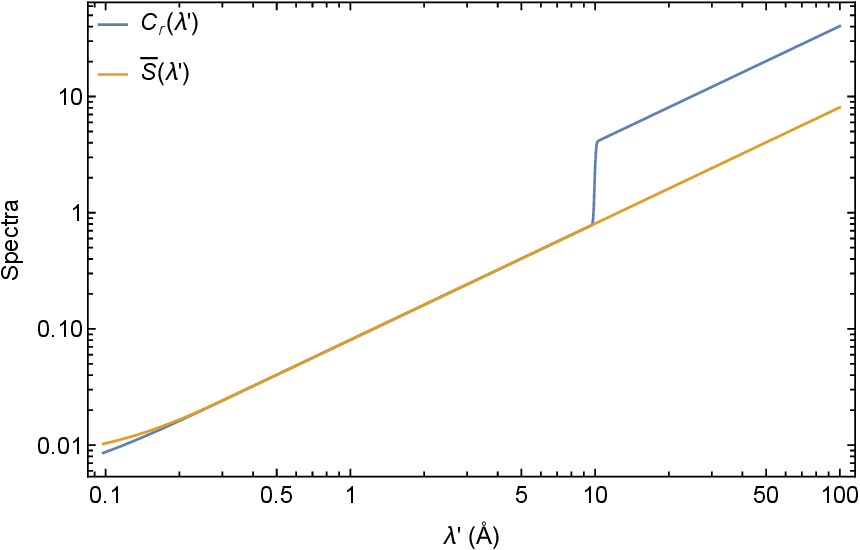}
    \end{minipage}
    \caption{\Change{Example of how unfolding can remove response features. Both panels use a $\gamma=1$ power law source model. Left panel uses $\sigma=0.01$\,\AA\ while the right uses $\sigma=0.1$\,\AA.  There is a discontinuity in the effective area at $10\,$\AA, seen in the count spectrum (blue), but effectively removed from the unfolded spectrum (orange).}}
    \label{fig:PowerLawSharpFeature}
\end{figure*}

Sharp feature can be more problematic for local source features, so we will next demonstrate how these are affected using the same Gaussian line profile from above. How a line is affected by the discontinuity in our effective area is dependent on where its line center is with respect to the discontinuity, either to the left/right or on top. The plots of these cases are given together in Figure~\ref{fig:LineSharpFeature} with three combinations of line and response widths. The top row is when both the line and response widths are equal, middle row is when the line is narrower than the response, and the bottom row is when the line is broader than the response. The left column is the count rate line profiles while the right column is the unfolded flux. \Change{All three lines have the same flux normalization, so they are meant to be identical except for their wavelengths and widths.} All plots are in arbitrary units.

\begin{figure*}
    \center
    \begin{minipage}{0.49\linewidth}
        \includegraphics[width=\linewidth]{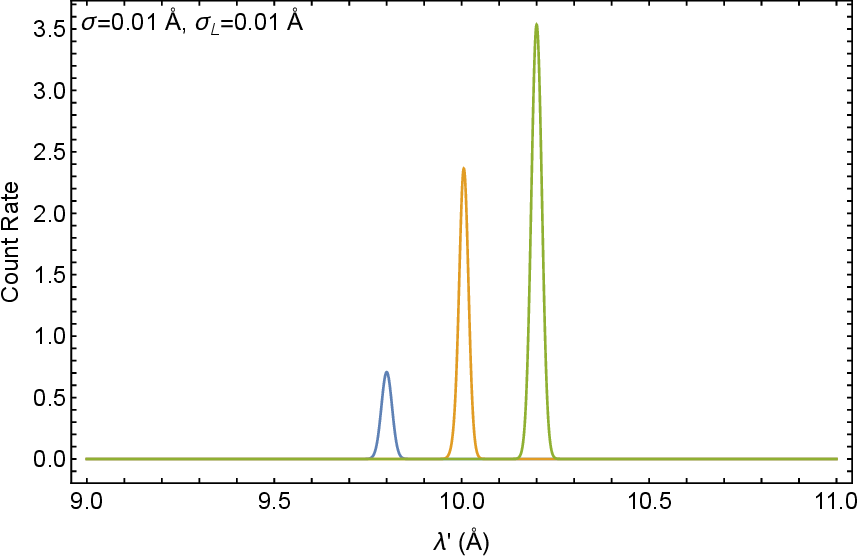}\\
        \includegraphics[width=\linewidth]{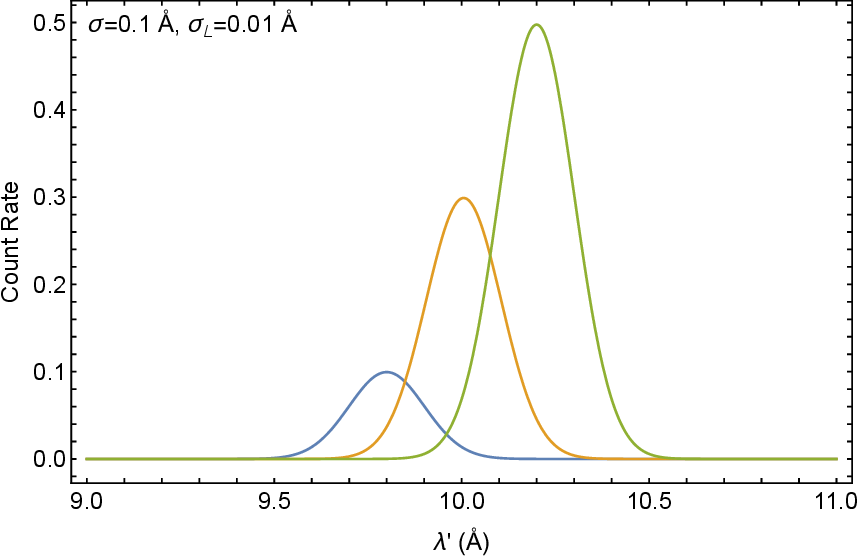}\\
        \includegraphics[width=\linewidth]{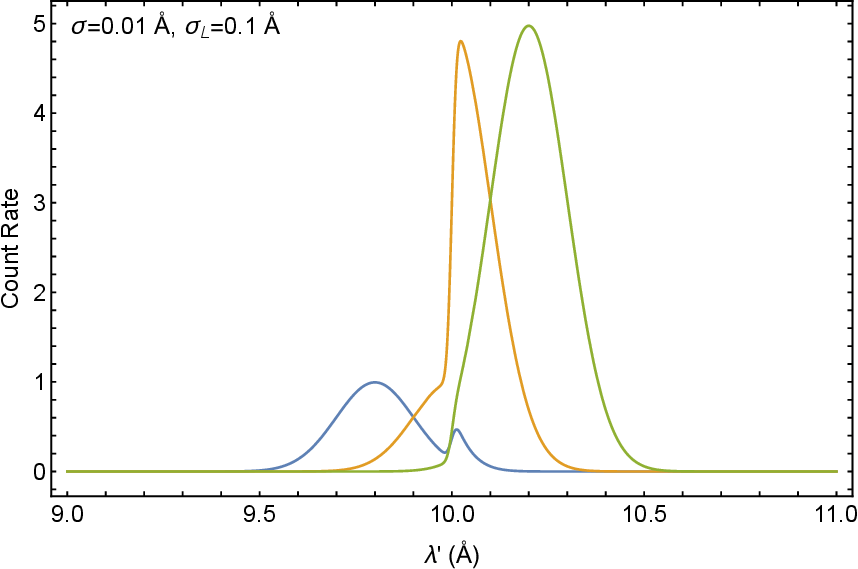}
    \end{minipage}
    \begin{minipage}{0.49\linewidth}
        \includegraphics[width=\linewidth]{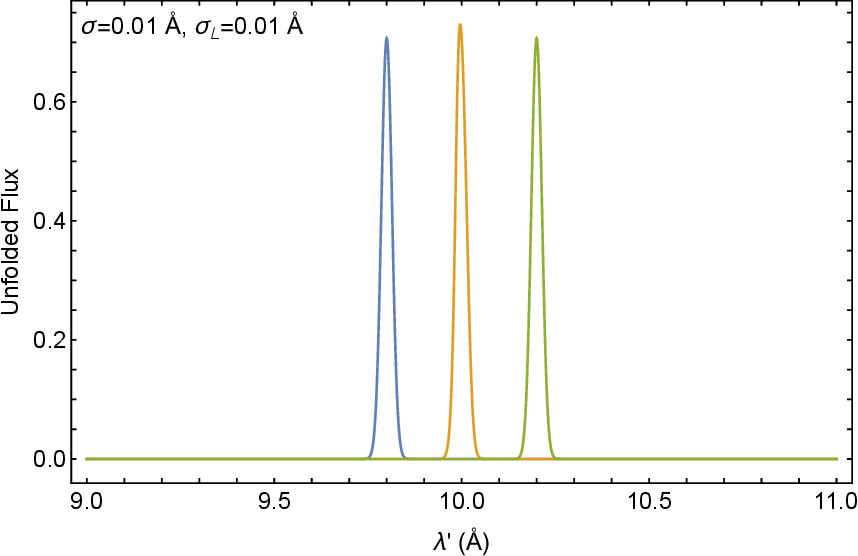}\\
        \includegraphics[width=\linewidth]{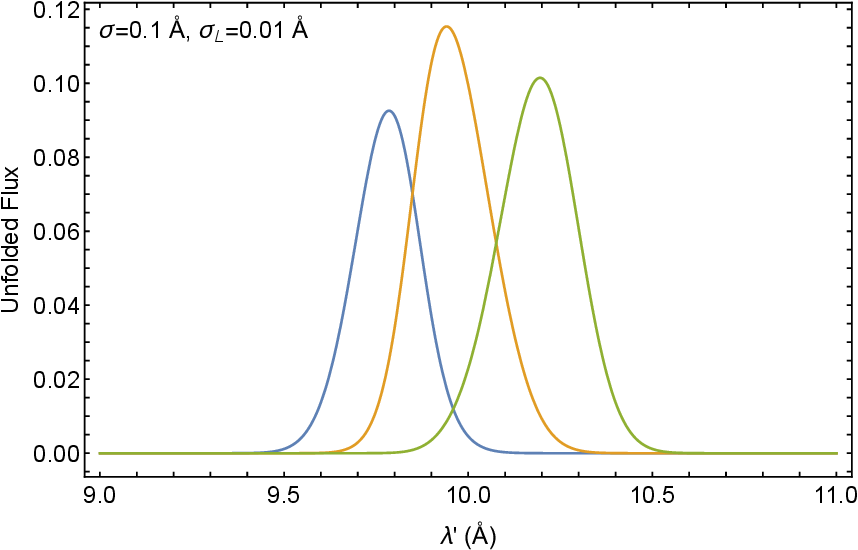}\\
        \includegraphics[width=\linewidth]{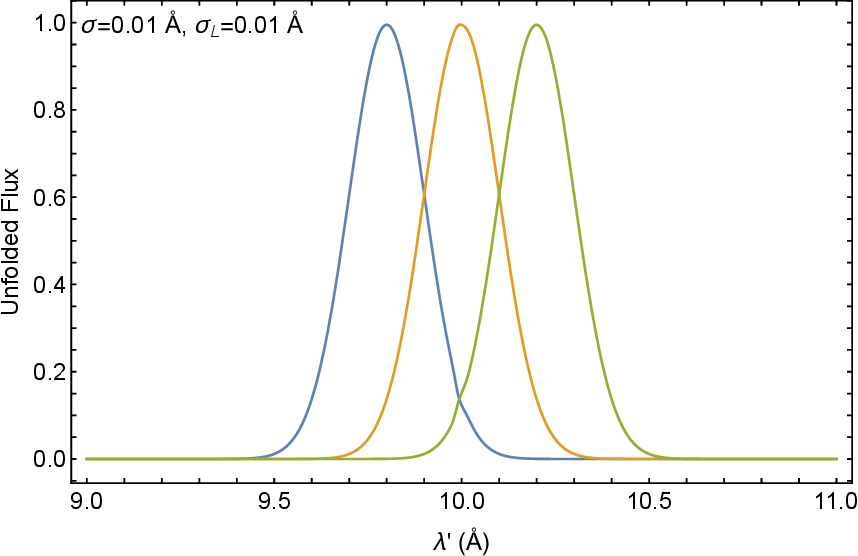}
    \end{minipage}
    \caption{Examples of unfolding local spectral features, e.g. emission lines, across a sharp response features. Left column is the observed count rates of a Gaussian line profile while the right column is the unfolded version. \Change{The three lines have the same flux. There is a discontinuity in the effective area at $10\,$\AA}}
    \label{fig:LineSharpFeature}
\end{figure*}

First we need to note the two kinds of distortions that the discontinuity causes in the left column. One is a simple change in the apparent relative strength between the three lines in each row due to the higher effective area collecting more counts. The other is a more complex distortion in line shape seen in the bottom row. For a line to the left of the sharp feature, a ``ghost line'' is formed at the jump whereas the line to the right has a sharper left wing. The line centered on of the jump is severely distorted with the entire left half being truncated. So for our unfolding process to be accurate, we need these to be corrected to a normal Gaussian shape and the relative strengths to be equalized.

Looking at the right column of Figure~\ref{fig:LineSharpFeature}, we can see that the unfolding is again dependent on both line and response widths. When the widths are equal, the relative strengths of the lines are reasonably recreated. There is a minor distortion in the line centered on the discontinuity, but it is not significant. When the lines are narrower than the response, there is still a significant difference in the line strengths. \Change{Which can be attributed to the lines not being resolved by the detector.} Of note for this case is the severity of the distortion of the lines. We again find that the central line (orange) centered on the discontinuity is most affected, in this case appearing to contain a significantly higher flux than the rest of the lines. On the other hand, while the lines to the left and right of the discontinuity show similar line strengths, they have visible difference in their unfolded fluxes.  Finally, when the line is broader than the response, both the line strengths and shape distortions are unfolded with only minor distortions to their overall shape. The minor distortions are primarily in the wings of the profiles to the left and right of the discontinuity. Such minor distortions would be at the level of noise in a real detector, an effect not demonstrated with these toy models, but would be modelled appropriately in any rigorous analysis.

\subsection{CCD Detectors}

For the case of CCD detectors, we will follow a similar principle with defining simple response functions with constant area. The response function of a CCD detector has many features beyond the primary photopeaks for a given incident photon \citep{Yacout1986} that make these demonstrations difficult. As such, we will restrict our toy model to that of only the Gaussian-like primary photopeak, \Change{which \citet{Yacout1986} parameterized as}
\begin{equation}
    R(E',E) = \frac{R_0}{\sqrt{2\pi}\sigma(E)}\exp\left(-\frac{1}{2}\left(\frac{E'-E}{w \sigma(E)}\right)^2\right).
\end{equation}
The response width of a CCD is energy dependent, with the functional form
\begin{equation}
    \sigma^2(E) = F\frac{E}{w}+\sigma_\mathrm{read}^2,
\end{equation}
where $F$ is the Fano factor, $E/w$ is the number of electrons produced by an incident photons, \Change{and $E$ and $E'$ are the photon and detector bin energies respectively}. The resolution of the detector is broadened by the readout noise $\sigma_\mathrm{read}$. For the examples that will follow below, we will use numbers that are relevant for \textit{Chandra}'s ACIS-I detector, so unless stated otherwise we will use $1/w=1/3.7$\;$e^{-}$\,eV$^{-1}$, $F=0.12$, and $\sigma=2\,e^-$ \Change{\citep{Arnaud2011}}.

If we start with the case of a power-law continuum source again, we first must note that there is no closed-form solution to the unfolded spectra in this case, so we will go directly to the plotted unfolding errors in Figure~\ref{fig:CCDPowerlawError}. We can note that for CCD detectors the unfolding's accuracy is dependent on the continuum's slope in the same way as the grating cases. Spectra with steeper slopes, as inferred by the simple power laws used here, will have less accurate unfoldings. Because of the energy dependence of a CCD detector's response width, the error increase quickly for steeper continua. We can generally say that for $E' \geq 1$\,keV the unfolding has minimal error regardless of the continuum slope. On the other hand, the softer portion of the spectrum must be judged based on its slope as to whether it can be unfolded to within some predefined tolerance.

\begin{figure}
    \center
    \includegraphics[width=\linewidth]{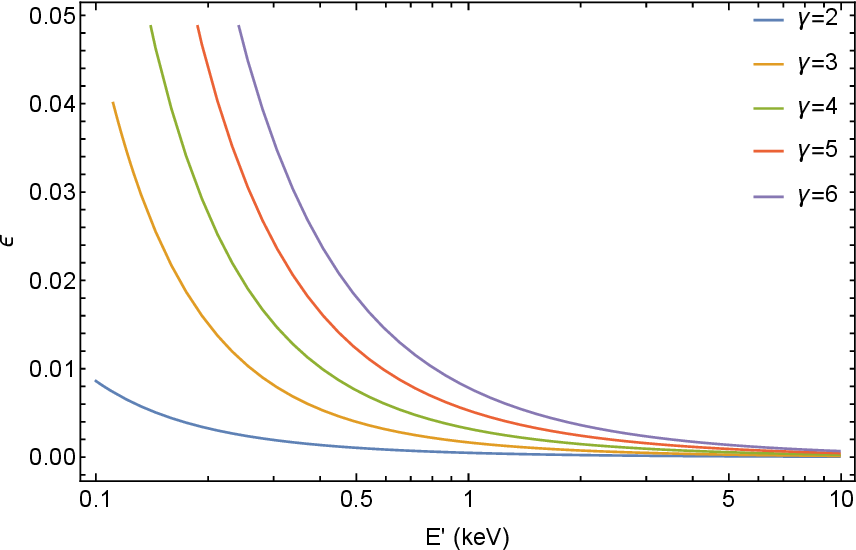}
    \caption{Error in the unfolding of a power law continuum observed with a CCD detector and constant area.}
    \label{fig:CCDPowerlawError}
\end{figure}

Moving onto the case of a line profile in Figure~\ref{fig:CCDLineProfileError}, we see again similarities to the grating case. For lines across the ACIS-I detector sensitivity, all can be similarly unfolded without any distortions. When the line width is increased, though, there can be minor distortions as shown for the case of $\sigma_L = 0.05$\,keV in the right panel. The unfolded line (solid red) is slightly redshifted as compared to the corresponding counts profile (dotted cyan). This distortion is exceptionally minor and would be covered by uncertainties in real data, but highlights the potential for measurable distortions if the resolution is low.

\begin{figure*}
    \center
    \begin{minipage}{0.49\linewidth}
        \includegraphics[width=\linewidth]{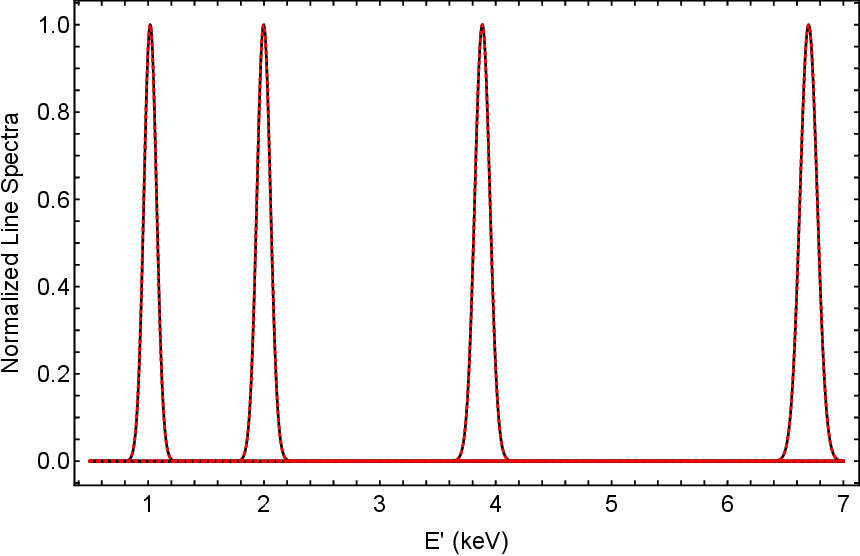}
    \end{minipage}
    \begin{minipage}{0.49\linewidth}
        \includegraphics[width=\linewidth]{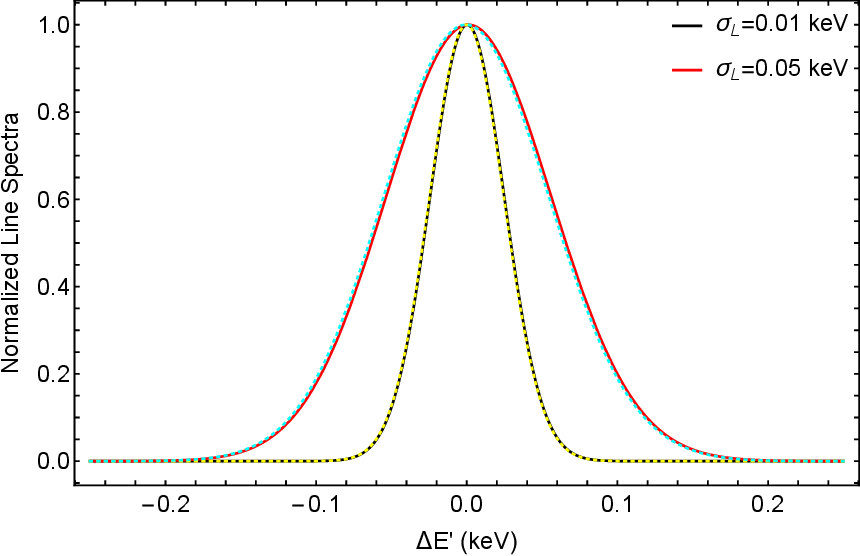}
    \end{minipage}
    \caption{\textbf{Left:} Comparison of normalized unfolded lines (solid black) and normalized counts lines (dotted red) at different wavelengths. Each line had a width of $\sigma_L=0.05$\,\AA. \textbf{Right:} Comparison of a normalized unfolded lines (solid) and normalized count lines (dotted) for different line widths.}
    \label{fig:CCDLineProfileError}
\end{figure*}

This toy CCD model allows to make the same general statements about unfolding spectra. If the detector is high resolution and the continuum is not steeply sloped, then the unfolded spectra will provide an accurate representation of the source for a given observation. Local features are potentially subject to these conditions as well but in general show distortions within ususal X-ray observation uncertainties. Unlike the gratings, there is an additional caveat that the unfolding is more energy dependent like a CCD response width. A given observation may only be unfoldable within a specific energy range where both conditions are met.

\section{Real Detectors: Example Unfolding Uses}\label{sec:RealDetectors}

In this section we will demonstrate specific use cases for unfolding real X-ray data from a number of telescopes. The list of objects and their Obs Ids are given in Table~\ref{tab:ObsIds}. \textit{Chandra} data was reprocessed with the standard pipeline in \textsc{ciao} version 4.15 \citep{Fruscione2006}. \Change{\textit{NuSTAR} data was processed with the standard pipeline in the \textsc{heasoft} version 6.33 tools provided by \textsc{sciserver} \citep{SciServer2020}.} Finally, for \textit{XMM}, we used the data products provided by the Pipeline Processing System. All figures generated from this data were made using \textsc{ISIS} \Change{version 1.6.2-51}.

\begin{deluxetable*}{llllc}
    \tablecaption{Observation IDs. \label{tab:ObsIds}}
    \tablehead{
        \colhead{Object} & \colhead{Telescope} & \colhead{Obs Id} & \colhead{Start Date} & \colhead{Exposure time (ks)}
    }
    \startdata
        \object{3C 273} & \textit{Chandra} & 459 & 2000-01-10 & 38.67 \\
        \object{$\zeta$ Pup} & \textit{XMM-Newton} & 0095810301& 2000-06-08& 57.42\\
        &&0095810401& 2000-10-15& 40.61\\
        &&0157160401& 2002-11-10& 42.41\\
        &&0157160501& 2002-11-17& 42.53\\
        &&0157160901& 2002-11-24& 43.67\\
        &&0157161101& 2002-12-15& 39.01\\
        &&0159360101& 2003-05-30& 69.34\\
        &&0163360201& 2003-12-06& 53.68\\
        &&0159360301& 2004-04-12& 61.34\\
        &&0159360401& 2004-11-13& 63.10\\
        &&0159360501& 2005-04-16& 64.21\\
        &&0159360701& 2005-10-15& 30.12\\
        &&0159360901& 2005-12-03& 53.58\\
        &&0159361101& 2006-04-17& 52.95\\
        &&0414400101& 2007-04-09& 63.91\\
        &&0159361301& 2008-10-13& 61.50\\
        &&0561380101& 2009-11-03& 64.31\\
        &&0561380201& 2010-10-07& 76.91\\
        \object{$\pi$ Aqr} & \textit{Chandra} & 26079 & 2022-08-23 & 9.93 \\
        &  & 27269 & 2022-08-27 & 9.99 \\
        &  & 26080 & 2022-09-05 & 29.67 \\
        &  & 26001 & 2022-09-13 & 19.80 \\
        &  & 27412 & 2022-09-14 & 21.78\\
        &  & 27325 & 2022-10-30 & 9.76\\
        & \textit{NuSTAR} & 30501009002 & 2019-11-04 & 49.93
    \enddata
\end{deluxetable*}

We provide no true analysis of the observations that are presented below. Our intention is to illustrate only, so any interpretations given should to not be taken beyond the scope of this text. Any interesting features will be left to future works by other authors to analyze.

\subsection{Correcting Detector Features}

The first demonstration is the removal of spectral features caused by the detector. We showed in \S~\ref{sec:SharpFeature} how a sharp feature like a jump discontinuity in the effective area can be removed by unfolding in theory. Real detectors are much more complex and deserve a full demonstration. Our example target for this is the quasar 3C\,273, whose early \textit{Chandra} \Change{HETG, specifically the MEG,} observation is shown in Figure~\ref{fig:3C273}. Our focus is on the 3 -- 9\,\AA\ region, where the Si absorption edge occurs near 7\,\AA. The left panel shows the usual count rate spectrum (black histogram) and the scaled MEG ARF (red line). In this region, the spectrum of 3C 273 follows the ARF of the spectrum perfectly, especially the Si edge, making \Change{it is difficult to infer an appropriate model to forward fold.}

One would of course use a priori knowledge of quasar physics to determine a suitable model and extract useful quantities. But let us imagine a scenario where such a priori knowledge is not known. Under such conditions, how would one determine an appropriate model? Unfolding can provide insights for making this determinination. In the right panel of Figure~\ref{fig:3C273}, we plot the scaled ARF again but this time with the unfolded spectrum of 3C\,273. The actual spectral shape of the object is \Change{clearer in this plot}, looking like either a linear line or a shallow power law. Proper forward folding techniques should still be applied as this only provides a \Change{guide to the} appropriate model to use.

\begin{figure*}
    \begin{minipage}{0.49\linewidth}
        \includegraphics[width=0.7\linewidth, angle=-90, trim=0 40 0 0]{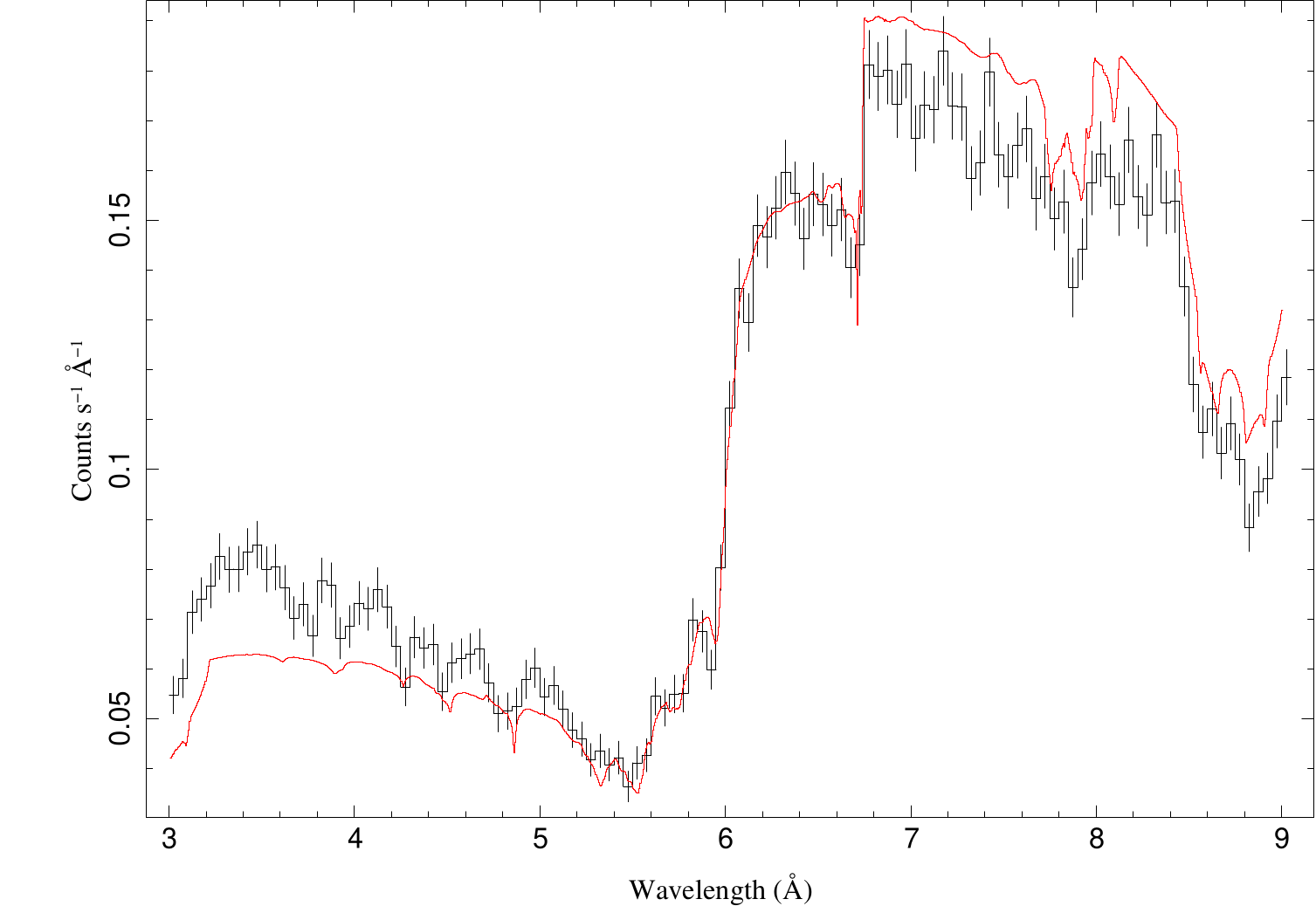}
    \end{minipage}
    \begin{minipage}{0.49\linewidth}
        \includegraphics[width=0.7\linewidth, angle=-90, trim=40 40 0 0]{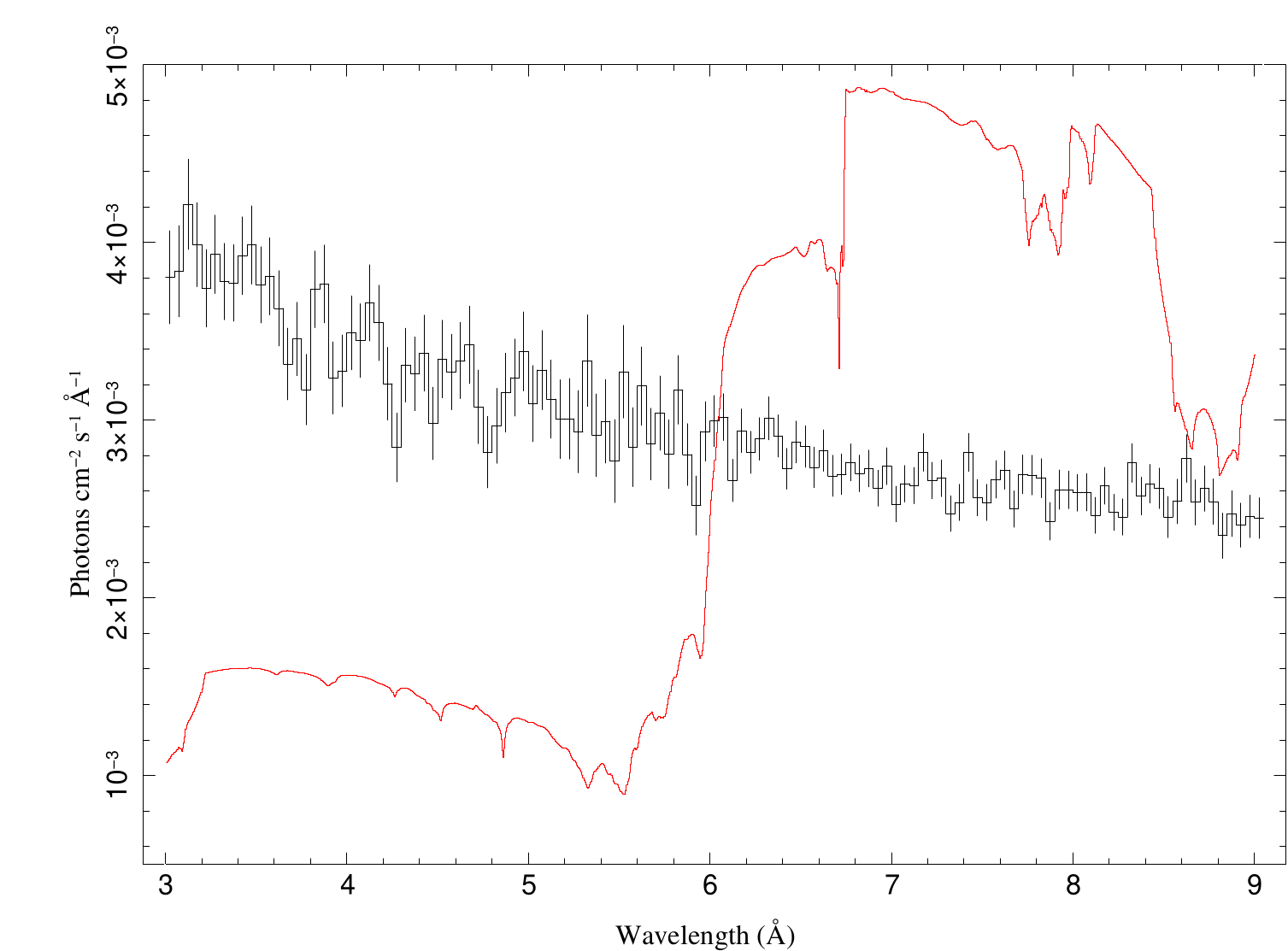}
    \end{minipage}
    \caption{\textit{Chandra} MEG observation of 3C 273 showing of how unfolding can remove features in spectra caused by the detector. Left panel is the counts spectra while the right panel is the unfolded version. In both panels, the black histogram is the spectrum binned by a constant factor of 10 and the scaled ARF values are in red.}
    \label{fig:3C273}
\end{figure*}

Of course not all observations will be this clear after unfolding, so unfolding can also be used for presentation purposes. \Change{This is shown in Figure~\ref{fig:zetaPup}, where we plot the \ion{Fe}{17} line region at 15\,\AA\ of 18 combined observations of $\zeta$ Pup taken with \textit{XMM}'s Reflection Grating Spectrometer {RGS}. We specifically plot only the RGS1 data to show the drop outs from detector edges in the count rate spectrum caused by \textit{XMM} not dithering, which would smear them over. Unfolding this spectra corrects these drop outs, providing a cleaner spectrum for presenting and an accurate representation of the intrinsic source spectrum.}

\begin{figure}
    \includegraphics[width=\linewidth]{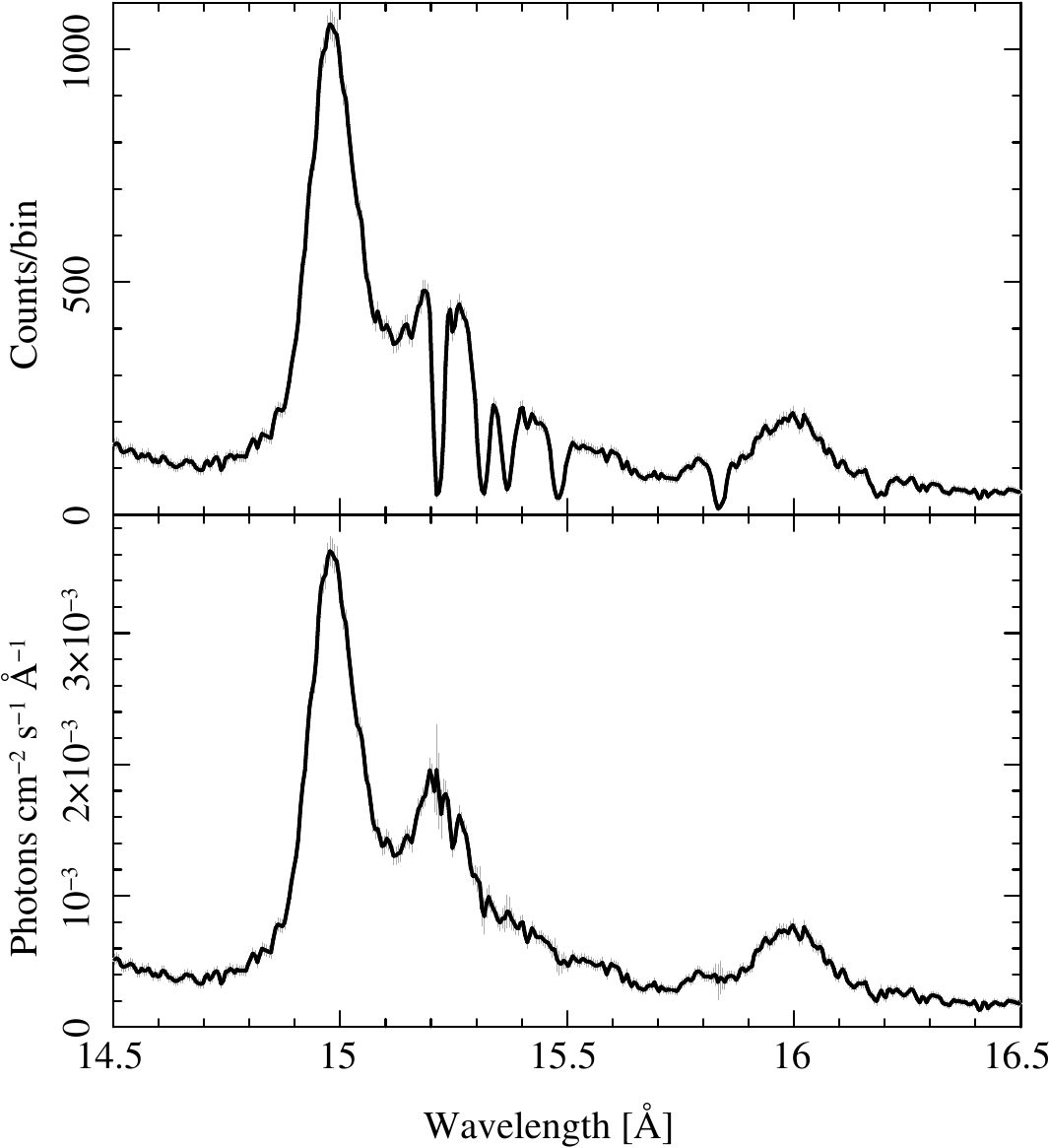}
    \caption{\Change{This combination of 18 XMM/RGS-1 observations, nearly 1 Ms exposure of $\zeta$ Puppis, demonstrates how drop-outs in the count spectrum---the prominent dips in the upper panel---which are due to sharp features in the effective area function, are corrected by unfolding (bottom). Statistical errorbars are plotted in gray, but are hard to see, except in the regions of very low counts.  The bottom panel is thus an accurate representation of the intrinsic source spectrum.  The errors are not magnified by unfolding, but are simply rescaled and have the same fractional uncertainty as in the upper counts spectrum.}}
    \label{fig:zetaPup}
\end{figure}

\Change{It should be made clear at this point that the process of unfolding does not change the relative uncertainty in a given spectrum. While the errors appear magnified in the bottom plot of Figure~\ref{fig:zetaPup}, they still represent the same fractional uncertainty as in the top plot. Since unfolding is only a simple rescaling in Equation~\eqref{eq:Sbar} and response functions are assumed to not have any intrinsic error, the uncertainty in a given channel in
\begin{equation}
    \delta \bar{S}(\lambda') = \frac{\delta C(\lambda')}{t\int_{\Delta \lambda(\lambda')} \mathscr{R}(\lambda', \lambda)\dd\lambda},
\end{equation}
where $\delta C(\lambda')$ is the uncertainty in the counts in the bin $\lambda'$.
}

\subsection{Comparing Detectors: Chandra and NuSTAR}

The more powerful use of unfolding is the ability to directly compare observations of the same object with different telescopes. Due to differences in effective areas, the usual presentation of count rate spectra is not always easily interpretable. As an example, see the top plot of Figure~\ref{fig:piAqr} showing \textit{Chandra} HEG + MEG (black histogram) and \textit{NuSTAR} A + B (blue histogram) observations of $\pi$ Aquarii. There is a large difference in the observed count rates because of the difference between the \textit{NuSTAR} and \textit{Chandra} response and effective area.

\begin{figure*}
    \centering
    \includegraphics[width=0.4\linewidth,angle=-90,trim=0 40 0 0]{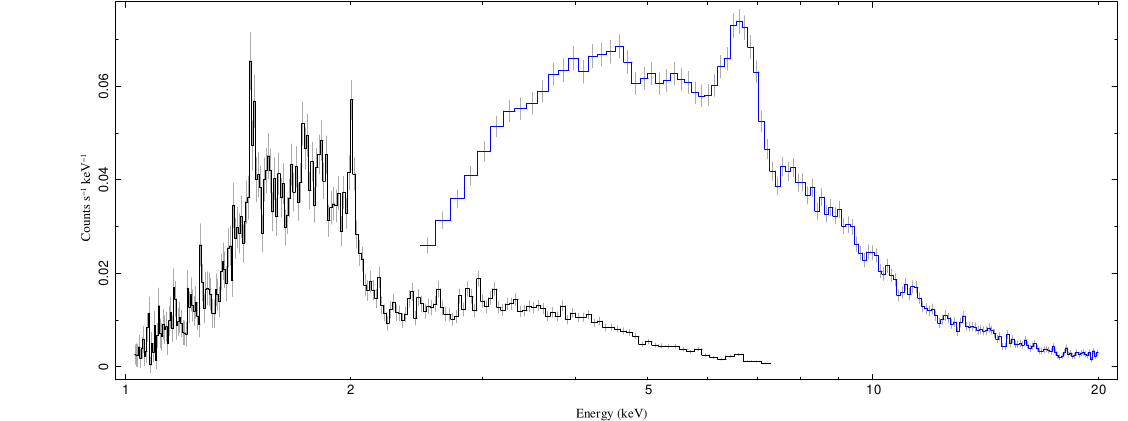}
    \includegraphics[width=0.4\linewidth,angle=-90,trim=0 40 0 0]{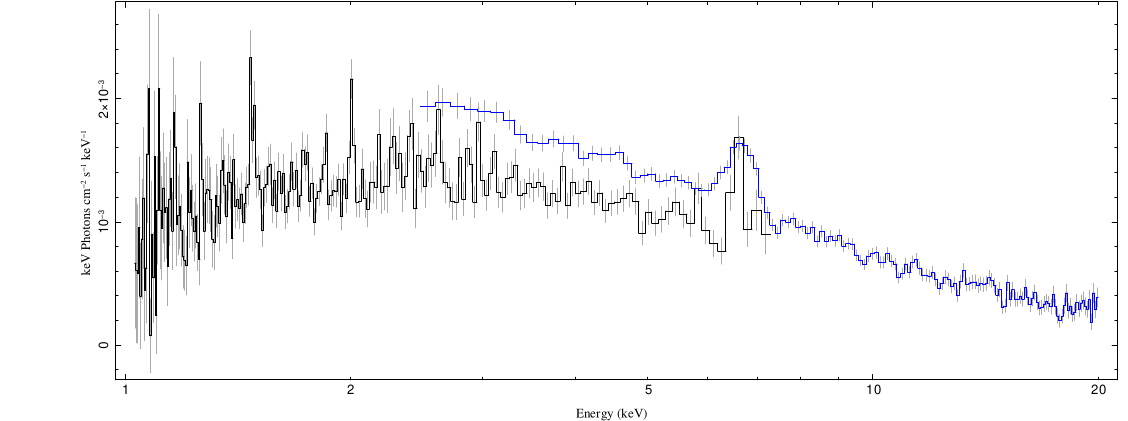}
    \caption{Count rate (top) and unfolded (bottom) spectra of $\pi$\,Aqr from \textit{Chandra}'s HETG (black) and \textit{NuSTAR} (blue). The HETG spectrum is the HEG + MEG plotted on the MEG grid and binned by a constant factor of 10. The \textit{NuSTAR} spectrum is telescopes A + B plotted on A's grid and binned by a constant factor of 3.}
    \label{fig:piAqr}
\end{figure*}

If we unfold the spectra, we can remove this difference and make a direct comparison \Change{of the intrinsic flux of the $\pi$\,Aqr}. This is shown in the bottom panel of Figure~\ref{fig:piAqr}. We can see now that there is an apparent difference in the observed flux between observations. \textit{NuSTAR} and \textit{Chandra} are known to be very close in flux calibration \citep{Madsen2017}, so this difference is real. This is also consistant with the known variability of $\pi$ Aqr \citep{Naze2019,Huenemoerder2024}.

\section{Conclusion}\label{sec:Conclusions}

We showed the conditions for which the ``unfolding'' method defined by Equation~\ref{eq:Sbar} gives an accurate representation of the source flux spectrum. This method, implemented within the \textsc{isis} software, distinguishes itself from other methods by being model independent. The resulting flux spectra is proven to be a unique representation for a given observation, but the uncertainty in the representation is dependent on what one assumes the actual source's functional form is. As such, this unfolding procedure is not a deconvolution method since it does not assume nor solve for any source flux functional form.

The conditions for accurate unfolding, inferred from toy models of grating and imaging CCD detectors, are the following:
\begin{enumerate}
    \item The detector is high-resolution,
    \item The spectrum is not steeply sloped.
\end{enumerate}
These conditions apply to both a grating and CCD detector, although the latter is more dependent on the correlation between the two. There are some cases, in particular based on the energies of interest, for which a spectrum can be very steeply sloped by still be unfolded. At the same time, the reverse is also true as there can be cases of a region with a shallow slope and lower resolution that can be unfolded accurately. Thus, each detector and spectrum must be judged individually to determine if they fit into these conditions and how much error is okay to introduce through the unfolding. In all cases, however, local source features like emission lines can be unfolded with \Change{minimal distortion to the intrinsic spectral shapes}. There is minimal to no distortion to the line shapes beyond the intrinsic broadening from the response that can not be removed.

We also showed cases for ways in which unfolding can be used for analysis and presentation. Since the unfolding removes the influence of the response, we can remove features present in spectrum caused by the effective area like absorption edges due to the detector material. Such uses can help elucidate appropriate models for analysis but primarily provide cleaner presentations of data. Unfolding is particularly useful when working with data from different telescopes on the same object. When cross-calibration factors are accounted for, the unfolded spectra show the flux difference either across different wavelengths or through time.

For current generation telescopes like \textit{Chandra}, \textit{XMM}, and \textit{NuSTAR}, their spectral resolving powers are high enough that unfolded spectra can be used for the purposes demonstrated here. With next generation detectors (\textit{XRISM} \citep{Makoto2020}, \textit{NewAthena} \citep{Bavdaz2023}, \textit{ARCUS} \citep{Smith16}), however, the available resolving power pushes closer to a Delta functional and makes the unfolding even more accurate. It may be the case that in future observations, the unfolded flux spectra could be fit directly instead of through forward folding.

\begin{acknowledgments}
    Support for S.J.G. and D.P.H. was provided by NASA through the Smithsonian Astrophysical Observatory (SAO) contract SV3-73016 to MIT for support of the Chandra X-Ray Center (CXC) and Science Instruments. CXC is operated by SAO for and on behalf of NASA under contract NAS8-03060. This research has made use of ISIS functions (ISISscripts) provided by ECAP/Remeis observatory and MIT (\url{http://www.sternwarte.uni-erlangen.de/isis/}).

    \Change{This paper employs a list of Chandra datasets, obtained by the Chandra X-ray Observatory, contained in~\dataset[DOI: 10.25574/cdc.299]{https://doi.org/10.25574/cdc.299}.}

    \facilities{\textit{XMM-Newton} (RGS), \textit{CXO} (HETG/ACIS-S), \textit{NuSTAR}}
    \software{ISIS \citep{Houck2000}}

\end{acknowledgments}

\bibliography{bib}{}

\begin{thebibliography}{}
\expandafter\ifx\csname natexlab\endcsname\relax\def\natexlab#1{#1}\fi
\providecommand{\url}[1]{\href{#1}{#1}}
\providecommand{\dodoi}[1]{doi:~\href{http://doi.org/#1}{\nolinkurl{#1}}}
\providecommand{\doeprint}[1]{\href{http://ascl.net/#1}{\nolinkurl{http://ascl.net/#1}}}
\providecommand{\doarXiv}[1]{\href{https://arxiv.org/abs/#1}{\nolinkurl{https://arxiv.org/abs/#1}}}

\bibitem[{{Arnaud}(1996)}]{Arnaud1996}
{Arnaud}, K.~A. 1996, in Astronomical Society of the Pacific Conference Series, Vol. 101, Astronomical Data Analysis Software and Systems V, ed. G.~H. {Jacoby} \& J.~{Barnes}, 17

\bibitem[{{Arnaud} {et~al.}(2011){Arnaud}, {Smith}, {Siemiginowska}, {Edgar}, {Grant}, {Kuntz}, \& {Schwartz}}]{Arnaud2011}
{Arnaud}, K.~A., {Smith}, R.~K., {Siemiginowska}, A., {et~al.} 2011, in AAS/High Energy Astrophysics Division, Vol.~12, AAS/High Energy Astrophysics Division \#12, 21.02

\bibitem[{{Bavdaz} {et~al.}(2023){Bavdaz}, {Wille}, {Ayre}, {Ferreira}, {Shortt}, {Fransen}, {Collon}, {Vacanti}, {Barri{\`e}re}, {Landgraf}, {Girou}, {Olde Riekering}, {Haneveld}, {Start}, {Schurink}, {van Baren}, {Monica Ferreira}, {Massahi}, {Svendsen}, {Christensen}, {Krumrey}, {Skroblin}, {Burwitz}, {Pareschi}, {Salmaso}, {Moretti}, {Spiga}, {Basso}, {Valsecchi}, {Vernani}, {Lupton}, {Mundon}, {Dunnell}, {Riede}, {Korhonen}, {Pasanen}, {Sanchez}, {Heinis}, {Colldelram}, {Tordi}, {Niewrzella}, \& {Willingale}}]{Bavdaz2023}
{Bavdaz}, M., {Wille}, E., {Ayre}, M., {et~al.} 2023, in Society of Photo-Optical Instrumentation Engineers (SPIE) Conference Series, Vol. 12679, Optics for EUV, X-Ray, and Gamma-Ray Astronomy XI, ed. S.~L. {O'Dell}, J.~A. {Gaskin}, G.~{Pareschi}, \& D.~{Spiga}, 1267902, \dodoi{10.1117/12.2678522}

\bibitem[{{Blissett} \& {Cruise}(1979)}]{Blissett1979}
{Blissett}, R.~J., \& {Cruise}, A.~M. 1979, \mnras, 186, 45, \dodoi{10.1093/mnras/186.1.45}

\bibitem[{{Davis}(2001)}]{Davis2001}
{Davis}, J.~E. 2001, \apj, 548, 1010, \dodoi{10.1086/319002}

\bibitem[{{Fruscione} {et~al.}(2006){Fruscione}, {McDowell}, {Allen}, {Brickhouse}, {Burke}, {Davis}, {Durham}, {Elvis}, {Galle}, {Harris}, {Huenemoerder}, {Houck}, {Ishibashi}, {Karovska}, {Nicastro}, {Noble}, {Nowak}, {Primini}, {Siemiginowska}, {Smith}, \& {Wise}}]{Fruscione2006}
{Fruscione}, A., {McDowell}, J.~C., {Allen}, G.~E., {et~al.} 2006, in Society of Photo-Optical Instrumentation Engineers (SPIE) Conference Series, Vol. 6270, Observatory Operations: Strategies, Processes, and Systems, ed. D.~R. {Silva} \& R.~E. {Doxsey}, 62701V, \dodoi{10.1117/12.671760}

\bibitem[{{Houck} \& {Denicola}(2000)}]{Houck2000}
{Houck}, J.~C., \& {Denicola}, L.~A. 2000, in Astronomical Society of the Pacific Conference Series, Vol. 216, Astronomical Data Analysis Software and Systems IX, ed. N.~{Manset}, C.~{Veillet}, \& D.~{Crabtree}, 591

\bibitem[{{Huenemoerder} {et~al.}(2024){Huenemoerder}, {Pradhan}, {Canizares}, {Gunderson}, {Ignace}, {Nichols}, {Pollock}, {Schulz}, {Swarm}, \& {Torrej{\'o}n}}]{Huenemoerder2024}
{Huenemoerder}, D.~P., {Pradhan}, P., {Canizares}, C.~R., {et~al.} 2024, \apjl, 966, L23, \dodoi{10.3847/2041-8213/ad4095}

\bibitem[{{Kaastra} \& {Bleeker}(2016)}]{Kaastra2016}
{Kaastra}, J.~S., \& {Bleeker}, J.~A.~M. 2016, \aap, 587, A151, \dodoi{10.1051/0004-6361/201527395}

\bibitem[{{Langer} {et~al.}(2020){Langer}, {Baade}, {Bodensteiner}, {Greiner}, {Rivinius}, {Martayan}, \& {Borre}}]{Langer2020}
{Langer}, N., {Baade}, D., {Bodensteiner}, J., {et~al.} 2020, \aap, 633, A40, \dodoi{10.1051/0004-6361/201936736}

\bibitem[{{Loredo} \& {Epstein}(1989)}]{Loredo1989}
{Loredo}, T.~J., \& {Epstein}, R.~I. 1989, \apj, 336, 896, \dodoi{10.1086/167060}

\bibitem[{{Madsen} {et~al.}(2017){Madsen}, {Beardmore}, {Forster}, {Guainazzi}, {Marshall}, {Miller}, {Page}, \& {Stuhlinger}}]{Madsen2017}
{Madsen}, K.~K., {Beardmore}, A.~P., {Forster}, K., {et~al.} 2017, \aj, 153, 2, \dodoi{10.3847/1538-3881/153/1/2}

\bibitem[{{Marshall} {et~al.}(2004{\natexlab{a}}){Marshall}, {Dewey}, \& {Ishibashi}}]{Herman2004b}
{Marshall}, H.~L., {Dewey}, D., \& {Ishibashi}, K. 2004{\natexlab{a}}, in Society of Photo-Optical Instrumentation Engineers (SPIE) Conference Series, Vol. 5165, X-Ray and Gamma-Ray Instrumentation for Astronomy XIII, ed. K.~A. {Flanagan} \& O.~H.~W. {Siegmund}, 457--468, \dodoi{10.1117/12.508320}

\bibitem[{{Marshall} {et~al.}(2004{\natexlab{b}}){Marshall}, {Tennant}, {Grant}, {Hitchcock}, {O'Dell}, \& {Plucinsky}}]{Herman04}
{Marshall}, H.~L., {Tennant}, A., {Grant}, C.~E., {et~al.} 2004{\natexlab{b}}, in Society of Photo-Optical Instrumentation Engineers (SPIE) Conference Series, Vol. 5165, X-Ray and Gamma-Ray Instrumentation for Astronomy XIII, ed. K.~A. {Flanagan} \& O.~H.~W. {Siegmund}, 497--508, \dodoi{10.1117/12.508310}

\bibitem[{{Naz{\'e}} {et~al.}(2019){Naz{\'e}}, {Rauw}, \& {Smith}}]{Naze2019}
{Naz{\'e}}, Y., {Rauw}, G., \& {Smith}, M. 2019, \aap, 632, A23, \dodoi{10.1051/0004-6361/201936307}

\bibitem[{{O'Dell} {et~al.}(2017){O'Dell}, {Swartz}, {Tice}, {Plucinsky}, {Marshall}, {Bogdan}, {Grant}, {Tennant}, \& {Dahmer}}]{ODell17}
{O'Dell}, S.~L., {Swartz}, D.~A., {Tice}, N.~W., {et~al.} 2017, in Society of Photo-Optical Instrumentation Engineers (SPIE) Conference Series, Vol. 10397, Society of Photo-Optical Instrumentation Engineers (SPIE) Conference Series, ed. O.~H. {Siegmund}, 103970C, \dodoi{10.1117/12.2274818}

\bibitem[{{Rhea} {et~al.}(2021){Rhea}, {Hlavacek-Larrondo}, {Kraft}, {Bogdan}, \& {Geelen}}]{Rhea2021}
{Rhea}, C., {Hlavacek-Larrondo}, J., {Kraft}, R., {Bogdan}, A., \& {Geelen}, R. 2021, Research Notes of the American Astronomical Society, 5, 113, \dodoi{10.3847/2515-5172/ac00c2}

\bibitem[{{Smith} {et~al.}(2016){Smith}, {Abraham}, {Allured}, {Bautz}, {Bookbinder}, {Bregman}, {Brenneman}, {Brickhouse}, {Burrows}, {Burwitz}, {Carvalho}, {Cheimets}, {Costantini}, {Dawson}, {DeRoo}, {Falcone}, {Foster}, {Grant}, {Heilmann}, {Hertz}, {Hine}, {Huenemoerder}, {Kaastra}, {Madsen}, {McEntaffer}, {Miller}, {Miller}, {Morse}, {Mushotzky}, {Nandra}, {Nowak}, {Paerels}, {Petre}, {Plice}, {Poppenhaeger}, {Ptak}, {Reid}, {Sanders}, {Schattenburg}, {Schulz}, {Smale}, {Temi}, {Valencic}, {Walker}, {Willingale}, {Wilms}, \& {Wolk}}]{Smith16}
{Smith}, R.~K., {Abraham}, M.~H., {Allured}, R., {et~al.} 2016, in Society of Photo-Optical Instrumentation Engineers (SPIE) Conference Series, Vol. 9905, Space Telescopes and Instrumentation 2016: Ultraviolet to Gamma Ray, ed. J.-W.~A. {den Herder}, T.~{Takahashi}, \& M.~{Bautz}, 99054M, \dodoi{10.1117/12.2231778}

\bibitem[{{Taghizadeh-Popp} {et~al.}(2020){Taghizadeh-Popp}, {Kim}, {Lemson}, {Medvedev}, {Raddick}, {Szalay}, {Thakar}, {Booker}, {Chhetri}, {Dobos}, \& {Rippin}}]{SciServer2020}
{Taghizadeh-Popp}, M., {Kim}, J.~W., {Lemson}, G., {et~al.} 2020, Astronomy and Computing, 33, 100412, \dodoi{10.1016/j.ascom.2020.100412}

\bibitem[{{Tashiro} {et~al.}(2020){Tashiro}, {Maejima}, {Toda}, {Kelley}, {Reichenthal}, {Hartz}, {Petre}, {Williams}, {Guainazzi}, {Costantini}, {Fujimoto}, {Hayashida}, {Henegar-Leon}, {Holland}, {Ishisaki}, {Kilbourne}, {Loewenstein}, {Matsushita}, {Mori}, {Okajima}, {Porter}, {Sneiderman}, {Takei}, {Terada}, {Tomida}, {Yamaguchi}, {Watanabe}, {Akamatsu}, {Arai}, {Audard}, {Awaki}, {Babyk}, {Bamba}, {Bando}, {Behar}, {Bialas}, {Boissay-Malaquin}, {Brenneman}, {Brown}, {Canavan}, {Chiao}, {Comber}, {Corrales}, {Cumbee}, {de Vries}, {den Herder}, {Dercksen}, {Diaz-Trigo}, {DiPirro}, {Done}, {Dotani}, {Ebisawa}, {Eckart}, {Eckert}, {Eguchi}, {Enoto}, {Ezoe}, {Ferrigno}, {Fujita}, {Fukazawa}, {Furuzawa}, {Gallo}, {Gorter}, {Grim}, {Gu}, {Hagino}, {Hamaguchi}, {Hatsukade}, {Hawthorn}, {Hayashi}, {Hell}, {Hiraga}, {Hodges-Kluck}, {Horiuchi}, {Hornschemeier}, {Hoshino}, {Ichinohe}, {Iga}, {Iizuka}, {Ishida}, {Ishihama}, {Ishikawa}, {Ishimura}, {Jaffe}, {Kaastra}, {Kallman}, {Kara}, {Katsuda}, {Kenyon}, {Kimball}, {Kitaguchi}, {Kitamoto}, {Kobayashi}, {Kobayashi}, {Kohmura}, {Kubota}, {Leutenegger}, {Li}, {Lockard}, {Maeda}, {Markevitch}, {Martz}, {Matsumoto}, {Matsuzaki}, {McCammon}, {McLaughlin}, {McNamara}, {Miko}, {Miller}, {Miller}, {Minesugi}, {Mitani}, {Mitsuishi}, {Mizumoto}, {Mizuno}, {Mukai}, {Murakami}, {Mushotzky}, {Nakajima}, {Nakamura}, {Nakazawa}, {Natsukari}, {Nigo}, {Nishioka}, {Nobukawa}, {Nobukawa}, {Noda}, {Odaka}, {Ogawa}, {Ohashi}, {Ohno}, {Ohta}, {Okamoto}, {Ota}, {Ozaki}, {Paltani}, {Plucinsky}, {Pottschmidt}, {Sampson}, {Sasaki}, {Sato}, {Sato}, {Sato}, {Sawada}, {Seta}, {Shibano}, {Shida}, {Shidatsu}, {Shigeto}, {Shinozaki}, {Shirron}, {Simionescu}, {Smith}, {Someya}, {Soong}, {Sugawara}, {Sugawara}, {Szymkowiak}, {Takahashi}, {Takeshima}, {Tamagawa}, {Tamura}, {Tanaka}, {Tanimoto}, {Terashima}, {Tsuboi}, {Tsujimoto}, {Tsunemi}, {Tsuru}, {Uchida}, {Uchida}, {Uchiyama}, {Ueda}, {Uno}, {Vink}, {Watanabe}, {Witthoeft}, {Wolfs}, {Yamada}, {Yamaoka}, {Yamasaki}, {Yamauchi}, {Yamauchi}, {Yanagase}, {Yaqoob}, {Yasuda}, {Yoshida}, {Yoshioka}, \& {Zhuravleva}}]{Makoto2020}
{Tashiro}, M., {Maejima}, H., {Toda}, K., {et~al.} 2020, in Society of Photo-Optical Instrumentation Engineers (SPIE) Conference Series, Vol. 11444, Space Telescopes and Instrumentation 2020: Ultraviolet to Gamma Ray, ed. J.-W.~A. {den Herder}, S.~{Nikzad}, \& K.~{Nakazawa}, 1144422, \dodoi{10.1117/12.2565812}

\bibitem[{Yacout {et~al.}(1986)Yacout, Gardner, \& Verghese}]{Yacout1986}
Yacout, A., Gardner, R., \& Verghese, K. 1986, Nuclear Instruments and Methods in Physics Research Section A: Accelerators, Spectrometers, Detectors and Associated Equipment, 243, 121, \dodoi{https://doi.org/10.1016/0168-9002(86)90830-2}

\end{thebibliography}
\bibliographystyle{aasjournal}

%% This command is needed to show the entire author+affiliation list when
%% the collaboration and author truncation commands are used.  It has to
%% go at the end of the manuscript.
%\allauthors

%% Include this line if you are using the \added, \replaced, \deleted
%% commands to see a summary list of all changes at the end of the article.
%\listofchanges

%\appendix

\end{document}